\newtheorem{theorem}{Theorem}
\newtheorem{proposition}[theorem]{Proposition}
\newenvironment{proof}[1][Proof]{\noindent \textbf{#1.} }{\  \rule{0.5em}{0.5em}}
\newcommand{\dlim}{\xrightarrow{\cal L}}
\newcommand{\ym}{{\bf y}_{t-}}
\newcommand{\bphi}{{\boldsymbol\phi}}
\newcommand{\bu}{{\bf U}}
\newcommand{\te}{{\tilde{\boldsymbol\varepsilon}}}
\newcommand{\ve}{{\rm vec}}
\newcommand{\bPhi}{{\boldsymbol\Phi}_0}
\begin{document}

\title{Confidence distributions for the  parameters in an autoregressive process}
\author{Rolf Larsson\footnote{Dept of Mathematics, Uppsala University, P.O.Box 480, SE-751 06 Uppsala, Sweden, rolf.larsson@math.uu.se.}}
\maketitle

\begin{abstract}
We suggest how to construct joint confidence distributions for several parameters and apply these ideas to an autoregressive process of general order. The implied non informative prior for the parameters, i.e. the ratio between the confidence density and the likelihood function, is proved to be asymptotically flat in the stationary case. However, in the presence of a unit root, the implied prior needs to be adjusted. The results are illustrated by simulation studies and empirical examples.
\vskip.2cm\noindent
{\it Keywords}: Confidence distributions, Autoregressive processes, Unit root test, Asymptotic normality, Non informative prior.
\end{abstract}

\section{Introduction}

Let us say that we have a set of observations (data) from a statistical model with some unknown parameter, $\theta$ say, that needs to be estimated. Bayesian statistical inference tackles this problem by assuming a prior distribution for $\theta$, and when multiplying by the likelihood function obtained from data and normalizing, the posterior distribution is gained. For a historical account on early Bayesian thinking, see e.g. Hald (2007).

In a seminal paper, Fisher (1930) introduced fiducial inference, which is a methodology that gives the distribution of $\theta$ without having to assume any prior. This was accomplished by inversion of a pivotal quantity. Later on, Fisher (1930) was heavily criticized for his idea, but his work inspired the notion of confidence distributions, cf e.g. Neyman (1941) and Cox (1958). Confidence distributions accomplish what Fisher tried to do in a more frequentist way of thinking, by relating to the concepts of confidence and P values for hypothesis testing. For more modern references, see e.g. Xie and Singh (2013), Schweder and Hjort (2016) (SH in the following) and Pawitan and Lee (2021). In particular, the latter paper discusses the notion of implied prior. The implied prior is obtained as the ratio between the confidence density and the likelihood function, and it can be seen as a non informative prior in the sense of Jeffreys (1931).

Larsson (2024) focused on autoregressive processes of order one. The main result was to show that in the stationary case, asymptotically the implied prior for the AR parameter is flat. In the present paper, we aim to generalize this result to an autoregressive process of general order. By simulation, we will also compare different methods of calculating the confidence region in terms of coverage and area. 

In the rest of the paper, section 2 gives a general presentation of the confidence distribution framework in the one parameter case, and discusses its generalization to several parameters. Section 3 specializes into a theoretical treatment of the autoregressive model. Section 4 provides simulation results, section 5 gives two applications and section 6 concludes. Omitted proofs and extra figures are provided in the Appendix.

\section{Confidence distributions}

At first, assume a one parameter model, i.e. we have a statistical model with a single parameter $\theta$ and a random sample $y_1,...,y_n$. Suppose that for a given $\theta_0$, we want to test $H_0$: $\theta\leq\theta_0$ vs $H_1$: $\theta>\theta_0$. Basing the test on some estimator $\hat\theta_{obs}$, for example the maximum likelihood estimator (MLE), and rejecting for large values, the P value of the test is given by
\begin{equation}
C(\theta_0)=P_{\theta_0}(\hat\theta\geq\hat\theta_{obs}),\label{Pvalue}
\end{equation}
where we calculate the probability under $\theta=\theta_0$. Moreover, $\hat\theta$ is the random variable of which $\hat\theta_{obs}$ is an observation. 

By construction, it is clear that given $\hat\theta_{obs}$, a small $\theta_0$ gives a small $\hat\theta$ and a small $C(\theta_0)$. Conversely, a large $\theta_0$ corresponds to $C(\theta_0)$ being close to one. Hence, it is natural that $C(\theta_0)$ is a monotone increasing function of $\theta_0$ taking values in the unit interval, i.e. it acts as a distribution function with argument $\theta_0$. Another way of putting this is to say that the larger the P value, the larger is the support for the null hypothesis that $\theta\leq\theta_0$. In this sense, $C(\theta_0)$ gives the {\it confidence} for $\theta\leq\theta_0$.
We say that $C(\theta)$ is a {\it confidence distribution function} of the parameter $\theta$. From now on, we will assume that the underlying random variable is continuous, so that $C(\theta)$ is a continuous function of $\theta$. 

Related concepts are the {\it confidence density} of $\theta$, obtained by differentiating $C(\theta)$ with respect to $\theta$, i.e
\begin{equation}
c(\theta)=\frac{d}{d\theta}C(\theta),\label{c}
\end{equation}
and the {\it confidence curve}
\begin{equation}
cc(\theta)=|1-2C(\theta)|,\label{cc}
\end{equation}
that depicts all equally tailed confidence intervals for $\theta$. For a more thorough and formal discussion of these concepts, see SH.

The confidence density $c(\theta)$ may be interpreted as a posterior density for $\theta$ in conjunction with a prior density $g(\theta)$ (the implied prior) that satisfies
\begin{equation}
c(\theta)=L(\theta)g(\theta),\label{impprior}
\end{equation}
where $L(\theta)$ is the likelihood, cf Pawitan and Lee (2021).

Now, say that the model has $m$ unknown parameters of interest, $\theta_1,...,\theta_m$. So far, the literature contains surprisingly little discussion, if any, of this case. For example, SH only discusses this multiparameter problem in the context of a focus parameter, i.e. one (function of the) parameter(s) is of interest, and the others act as nuisance parameters.

Let us try to discuss the multiparameter case intuitively, and for the sake of argument, we start with $m=2$. At first, we generalize the P value equation (\ref{Pvalue}) into
\begin{equation}
C(\theta_{10},\theta_{20})=P_{\theta_{10},\theta_{20}}(\{\hat\theta_1\geq\hat\theta_{1,obs}\}\cap\{\hat\theta_2\geq\hat\theta_{2,obs}\}).\label{Pvalue2}
\end{equation}

We 
have the important intuition that, just as $C(\theta_0)$ in (\ref{Pvalue}) may be interpreted as giving the confidence for the null hypothesis $\theta\leq\theta_0$, $C(\theta_{10},\theta_{20})$ in (\ref{Pvalue2}) gives the confidence for the null hypothesis $\theta_1\leq\theta_{10}$ and $\theta_2\leq\theta_{20}$ in the case $m=2$. This in turn corresponds to the joint confidence distribution function of $(\theta_1,\theta_2)$.  

Note that if the intersection $\cap$ in (\ref{Pvalue2}) is replaced by the union $\cup$, we do not obtain a joint distribution function.

The joint confidence density of $(\theta_1,\theta_2)$ would in this case be the mixed derivative
\begin{equation}
c(\theta_1,\theta_2)=\frac{\partial^2}{\partial\theta_1\partial\theta_2}C(\theta_1,\theta_2).\label{c2}
\end{equation}
Generalizing the univariate confidence curve of (\ref{cc}) is less straightforward, since if we go beyond rectangular shapes, it is hard to establish the meaning of equally tailed confidence regions. One suggestion is to construct the $1-\alpha$ confidence region for $(\theta_1,\theta_2)$ as the region with $c(\theta_1,\theta_2)\geq k$ for a constant $k$ such that the confidence of the region equals $1-\alpha$. Then, the confidence 'curve' is the ensemble of such regions obtained when varying the confidence.

Another way to obtain a confidence 'curve' is to use a test statistic, see section 3.

It is clear that we may generalize the confidence distribution function for $m=2$ to the case of a general $m$ as
\begin{equation}
C(\theta_{10},...,\theta_{m0})=P_{\theta_{10},\theta_{20}}(\{\hat\theta_1\geq\hat\theta_{1,obs}\}\cap...\cap\{\hat\theta_m\geq\hat\theta_{m,obs}\}),\label{Pvaluem}
\end{equation}
with corresponding generalizations of the joint confidence density and the confidence 'curve'.

\section{The autoregressive model}

\subsection{Preliminaries}

Consider an autoregressive process of order $p$,
\begin{equation}
Y_t=\phi_1 Y_{t-1}+...+\phi_p Y_{t-p}+\varepsilon_t,\label{ARp}
\end{equation}
where $t=1,2,...$ and the $\varepsilon_t$ are independent and normally distributed with expectation 0 and unknown variance $\sigma^2$. Say that we have observations $y_1,...,y_n$.  Assume that $y_0=y_{-1}=...=y_{1-p}=0$.

By writing (\ref{ARp}) as a regression model,
\begin{equation}
{\bf y}={\bf X}{\boldsymbol\phi}+{\boldsymbol\varepsilon},\label{reg}
\end{equation}
where 
$${\bf y}=\left(\begin{array}{c}y_1\\y_2\\ \vdots \\y_n\end{array}\right),
\
{\bf X}=\left(\begin{array}{cccc}y_0&y_{-1}&\hdots&y_{1-p}\\
y_1&y_0&\hdots&y_{2-p}\\
\vdots&\vdots&&\vdots\\
y_n&y_{n-1}&\hdots&y_{n+1-p}\end{array}\right),\
{\boldsymbol\phi}=\left(\begin{array}{c}\phi_1\\ \phi_2\\ \vdots \\ \phi_p\end{array}\right),
\
{\boldsymbol\varepsilon}=\left(\begin{array}{c}\varepsilon_1\\\varepsilon_2\\ \vdots \\\varepsilon_n\end{array}\right),
$$
we obtain the least squares estimates (also the maximum likelihood estimates, the MLEs, under the initial value assumption above)
\begin{equation}
\hat{\boldsymbol\phi}=({\bf X}'{\bf X})^{-1}{\bf X}'{\bf y},\label{LSE}
\end{equation}
where $\hat{\boldsymbol\phi}=(\hat\phi_1,...,\hat\phi_p)'$.

Suppose that the true parameters are $\bphi_0=(\phi_{10},...,\phi_{p0})'$.
Assume that $\{y_t\}$ is stationary for a suitable choice of initial distribution. Then, by e.g. Shumway and Stoffer (2017) (SS in the following) p.126, as $n\to\infty$,
$$\sqrt{n}(\hat\bphi-\bphi_0)\dlim N\left({\bf 0},\Omega_0\right),$$
where $\dlim$ denotes convergence in distribution, ${\bf 0}$ is a zero vector and $\Omega_0$ is the asymptotic covariance matrix, which in turn is a function of $\bphi_0$. We may replace $\Omega_0$ by $\hat\Omega$, where we have inserted the observed estimate $\hat\bphi_{obs}$ for $\bphi_0$ in $\Omega_0$, to obtain the further asymptotic result
\begin{equation}
\sqrt{n}(\hat\bphi-\bphi_0)\dlim N\left({\bf 0},\hat\Omega\right).\label{MLas}
\end{equation}
Hence, $\hat\bphi-\bphi_0$ is approximately $N({\bf 0}_p,n^{-1}\hat\Omega)$, where ${\bf 0}_p$ is a $p$ dimensional vector of zeros. In dimension $p=2$ (SS, p. 126),
$$\hat\Omega=\left(\begin{array}{cc}1-\hat\phi_2^2&
-\hat\phi_1(1+\hat\phi_2)\\-\hat\phi_1(1+\hat\phi_2)&1-\hat\phi_2^2
\end{array}\right).$$

Say that we want to test the null hypothesis $H_0$: $\boldsymbol\phi=\boldsymbol\phi_0$ vs the alternative $H_1$: $\neg H_0$. The Wald statistic for this test is
\begin{equation}
Q=(\hat\bphi-\bphi_0)'\hat\Omega^{-1}(\hat\bphi-\bphi_0),\label{Wald}
\end{equation}
and it follows easily from (\ref{MLas}) that as $n\to\infty$, the asymptotic distribution of $Q$ is $\chi^2$ with $p$ degrees of freedom, $\chi^2_p$. All $\boldsymbol\phi_0$ that are not rejected by the test of $H_0$ vs $H_1$ at significance level $\alpha$ constitute a  confidence region for $\boldsymbol\phi$ with confidence level $1-\alpha$. Hence, for large $n$ and parameters well inside of the stationarity region, we obtain a confidence curve with approximate confidence level $1-\alpha$ by including all $\boldsymbol\phi=\boldsymbol\phi_0$ that satisfy $Q\leq\chi^2_p(\alpha)$, where $Q$ is as in (\ref{Wald}) and such that $P\{X\geq \chi^2_p(\alpha)\}=\alpha$, where $X$ is distributed as $\chi^2_p$. For smaller $n$ or when the parameters are close to the border of the stationarity region, a better approximation should be obtained by bootstrapping the percentile.

Since the confidence curve depicts confidence regions for all confidence levels, the Wald statistic may be readily used for its construction. In Figures 1-4, we illustrate such confidence curves for dimension $p=2$, obtained by simulation from (\ref{ARp}) with $\sigma^2=1$ and $n=100$. Figure 1 and 2 depict the case 
$\boldsymbol\phi_0=(0.4,\ 0.2)$, for which we obtained the MLE 
$\hat\bphi_0\approx (0.37,\ 0.25)$ in this simulation. 
In Figure 1 we use the $\chi^2$ approximation and in Figure 2 we employ the bootstrap, with 1000 bootstrap samples. Corresponding to these are Figures 3 and 4, but here $n=50$, 
$\boldsymbol\phi_0=(0.6,\ 0.3)$ with MLE $\hat\bphi_0\approx (0.49,\ 0.41)$.

\begin{figure}[htb]
\begin{center}
\includegraphics[width=90mm]{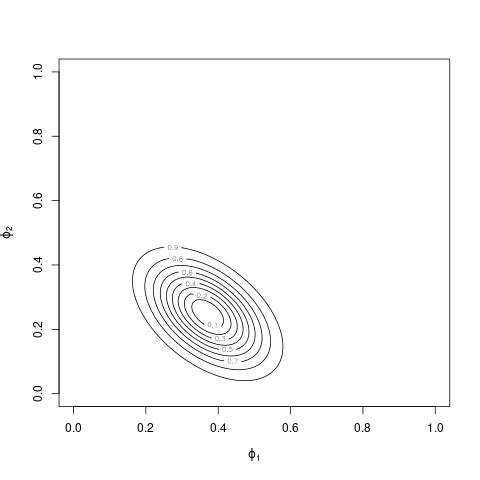}
\end{center}
\caption{Simulated confidence curve obtained by $\chi^2$ approximation of the Wald statistic for the case $n=100$, $\sigma^2=1$, $\boldsymbol\phi_0=(0.4,\ 0.2)$, $\hat\bphi_0\approx (0.37,\ 0.25)$.}
\end{figure}

\begin{figure}[htb]
\begin{center}
\includegraphics[width=90mm]{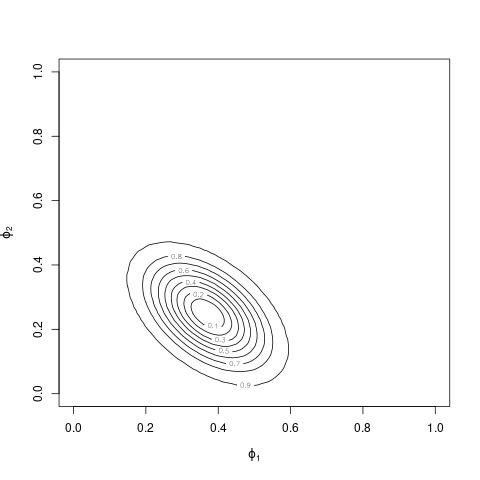}
\end{center}
\caption{Simulated confidence curve obtained by bootstrap of the Wald statistic for the case $n=100$, $\sigma^2=1$, $\boldsymbol\phi_0=(0.4,\ 0.2)$, $\hat\bphi_0\approx (0.37,\ 0.25)$.}
\end{figure}

\begin{figure}[htb]
\begin{center}
\includegraphics[width=90mm]{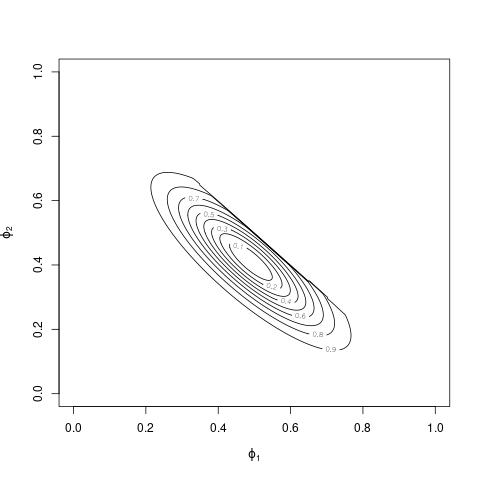}
\end{center}
\caption{Simulated confidence curve obtained by $\chi^2$ approximation of the Wald statistic for the case $n=50$, $\sigma^2=1$, $\boldsymbol\phi_0=(0.6,\ 0.3)$, $\hat\bphi_0\approx (0.49,\ 0.41)$.}
\end{figure}

\begin{figure}[htb]
\begin{center}
\includegraphics[width=90mm]{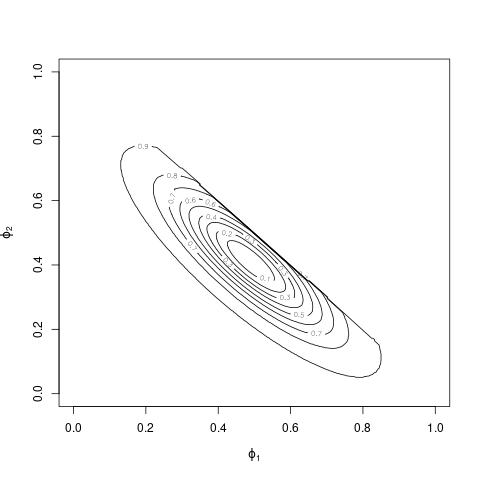}
\end{center}
\caption{Simulated confidence curve obtained by bootstrap of the Wald statistic for the case $n=50$, $\sigma^2=1$, $\boldsymbol\phi_0=(0.6,\ 0.3)$, $\hat\bphi_0\approx (0.49,\ 0.41)$.}
\end{figure}

From Figures 1 and 2, we note that both confidence curves are centered at a point close to the MLE. They look quite similar to each other, although at a closer look, the bootstrap curve appears to be slightly less concentrated around the MLE. It is quite expected that they look similar to each other, since $n=100$ may be considered a fairly large sample size and the true and estimated parameters are inside of the stationarity region by a large margin. As is well-known (cf SS, p. 88), this region is given by the conditions 
\begin{equation}\phi_1+\phi_2<1,\ \phi_2-\phi_1<1,\ \phi_2>-1.
\label{statreg}
\end{equation}

The situation is different in Figures 3 and 4. Here, we are close to the boundary $\phi_1+\phi_2=1$ of the stationarity region. Also, compared to the previous case, we have a smaller sample size here ($n=50$). These two facts imply that the $\chi^2$ approximation works less well. Note that the bootstrap based confidence curve has a much larger 'spread' than its asymptotic counterpart, the latter quite probably being too 'optimistic'. But the most striking feature is the 'cutoff' on the boundary $\phi_1+\phi_2=1$, where in both Figures, the confidence curve immediately rises to 1. This must be so, because non stationary parameter combinations are outruled from the model.  Hence, in this case, confidence regions at realistic levels will not be of purely elliptic shapes.

For more details on simulation setups, the reader is referred to Appendix 2.

\subsection{Confidence density function}

Suppose we have observed the MLE $\hat\bphi_{obs}=(\hat\phi_{1,obs},\hat\phi_{2,obs},...,\hat\phi_{p,obs})'$. In analogy with (\ref{Pvaluem}), the confidence distribution function is found from
\begin{equation}
C(\phi_1,...,\phi_p)=P_{\phi_1,...,\phi_p}(\hat\phi_1\geq\hat\phi_{1,obs},...,\ \hat\phi_p\geq\hat\phi_{p,obs}),\label{Cexact}
\end{equation}
where the probability is calculated under the distribution of the maximum likelihood estimators $\hat\bphi=(\hat\phi_1,...,\hat\phi_p)'$.

This corresponds to the approximated confidence distribution function (cf (\ref{MLas}))
\begin{align}
\tilde C(\bphi)
&=\int_{x_1=\hat\phi_{1,obs}-\phi_1}^\infty
\int_{x_2=\hat\phi_{2,obs}-\phi_2}^\infty\ldots
\int_{x_p=\hat\phi_{p,obs}-\phi_p}^\infty\notag\\
&\frac{n^{p/2}}{(2\pi)^{p/2}|\hat\Omega|^{1/2}}
\exp\left(-\frac{n}{2}{\bf x}'\hat\Omega^{-1}{\bf x}\right)dx_p\ldots dx_2dx_1,\label{Capprox}
\end{align}
where ${\bf x}=(x_1,x_2,...,x_p)'$. An approximate confidence density of $\bphi$ is obtained by successively differentiating (\ref{Capprox}) with respect to $\phi_1,...,\phi_p$. To this end, we note that
\begin{align*}
\frac{\partial}{\partial\phi_1}\tilde C(\bphi)&=
\int_{x_2=\hat\phi_{2,obs}-\phi_2}^\infty\ldots
\int_{x_p=\hat\phi_{p,obs}-\phi_p}^\infty\\
&\frac{n^{p/2}}{(2\pi)^{p/2}|\hat\Omega|^{1/2}}
\exp\left(-\frac{n}{2}\tilde{\bf x}'\hat\Omega^{-1}\tilde{\bf x}\right)dx_p\ldots dx_2,
\end{align*}
where $\tilde{\bf x}=(\hat\phi_{1,obs}-\phi_1,x_2,...,x_p)'$. (The minus sign resulting from inserting the lower integration limit cancels with the minus sign of the derivative of the same limit.)  Going on like this, we obtain the approximate confidence density
\begin{align}
\tilde c(\bphi)&=\frac{\partial^p}{\partial\phi_1\cdots\partial\phi_p}\tilde C(\bphi)\notag\\
&=\frac{n^{p/2}}{(2\pi)^{p/2}|\hat\Omega|^{1/2}}
\exp\left\{-\frac{n}{2}(\hat\bphi_{obs}-\bphi)'\hat\Omega^{-1}(\hat\bphi_{obs}-\bphi)\right\}.\label{cdens}
\end{align}

The finite sample confidence density (cd) may be estimated from the corresponding confidence distribution function (cdf) as follows, where we specialize to the bivariate case only. At first, the cd is estimated using (\ref{Cexact}) and bootstrap. To get an explicit approximation of the cdf from this, we generalize the approach of Larsson (2024) and regress $\Phi^{-1}$ of the cdf on a quadratic function of the parameters $(\phi_1,\phi_2)$, where $\Phi^{-1}(\cdot)$ is the inverse standard normal distribution function. The estimated cd, $\hat c(\phi_1,\phi_2)$ say, is then obtained as the mixed derivative of this approximate cdf. See Appendix 2 for further details.

It is then possible to estimate a $1-\alpha$ confidence region by numerically finding a constant $K$ such that $\{c(\phi_1,\phi_2)>K\}$ defines a region in the $\phi_1\times\phi_2$ plane such that the volume under $c(\phi_1,\phi_2)$ over this region equals $1-\alpha$.

The corresponding operation may be applied to the asymptotic approximation of the cd given in (\ref{cdens}). In Figures 5 and 6, we compare 95\% confidence regions from these methods (called 'cd bootstrap' and 'cd asymptotic') to the Wald based confidence regions which are obtained from the corresponding Wald confidence curves described under section 3.1. For these simulations, we have chosen $n=100$ and $\sigma^2=1$. Figure 5 is based on a simulation with $\boldsymbol\phi_0=(0,\ 0)$, $\hat\bphi_0\approx (0.04,\ 0.07)$ and for Figure 6, we have $\boldsymbol\phi_0=(0.4,\ 0.2)$, $\hat\bphi_0\approx (0.35,\ 0.19)$. The 'sausage shaped' cd bootstrap region in Figure 5 is found to be fairly large compared to the regions for the other three methods. In Figure 6, the cd bootstrap region is a bit of similarly shaped, but comparatively smaller.  (Note that it is prevented to cross the $\phi_1+\phi_2=1$ line.) In Figure 5, the regions based on the asymptotic confidence density and the Wald statistic with asymptotic critical value basically coincide, while the region based on the Wald statistic with a bootstrapped critical value is slightly wider. We observe about the same in Figure 6, but here the difference to bootstrapped Wald is smaller. 

This tells us something about the areas of the confidence regions, but it gives no information of their coverage. We will come back to that in the simulation study of the next section.

\begin{figure}[htb]
\begin{center}
\includegraphics[width=90mm]{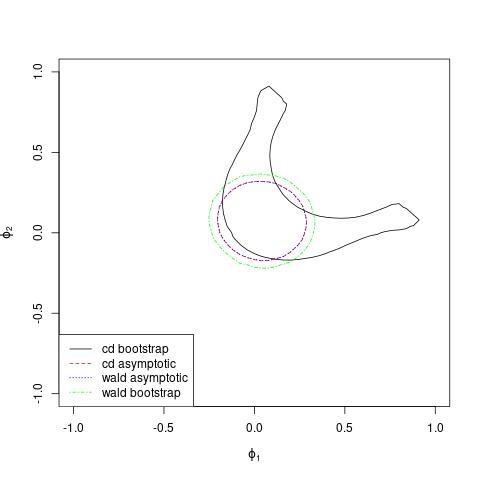}
\end{center}
\caption{Simulated 95\% confidence regions for the case $n=100$, $\sigma^2=1$, $\boldsymbol\phi_0=(0,\ 0)$, $\hat\bphi_0\approx (0.04,\ 0.07)$.}
\end{figure}

\begin{figure}[htb]
\begin{center}
\includegraphics[width=90mm]{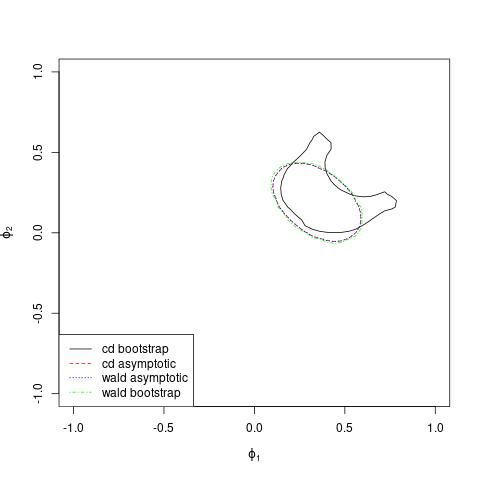}
\end{center}
\caption{Simulated 95\% confidence regions for the case $n=100$, $\sigma^2=1$, $\boldsymbol\phi_0=(0.4,\ 0.2)$, $\hat\bphi_0\approx (0.35,\ 0.19)$.}
\end{figure}

\newpage
  
\subsection{Likelihood}

Write $\bphi=(\phi_1,\phi_2,...,\phi_p)'$, and recall that by assumption,\\ $y_0=y_{-1}=...=y_{1-p}=0$. The likelihood is
\begin{align}
L(\bphi,\sigma^2)&=\frac{1}{\sqrt{2\pi\sigma^2}}\exp\left(-\frac{y_1^2}{2\sigma^2}\right)
\cdot\frac{1}{\sqrt{2\pi\sigma^2}}\exp\left\{-\frac{(y_2-\phi_1 y_1)^2}{2\sigma^2}\right\}\notag\\
&\cdot\ldots\cdot\frac{1}{\sqrt{2\pi\sigma^2}}
\exp\left\{-\frac{(y_p-\phi_1 y_{p-1}-...-\phi_{p-1} y_1)^2}{2\sigma^2}\right\}\notag\\
&\cdot\prod_{t=p+1}^n\frac{1}{\sqrt{2\pi\sigma^2}}
\exp\left\{-\frac{(y_t-\phi_1 y_{t-1}-...-\phi_p y_{t-p})^2}{2\sigma^2}\right\}\notag\\
&=(2\pi\sigma^2)^{-n/2}\exp\left(-\frac{1}{2\sigma^2}A\right),
\label{arlik}
\end{align}
where, defining $\ym=(y_{t-1},y_{t-2},...,y_{t-p})'$,
\begin{align}
A&=\sum_{t=1}^n(y_t-\phi_1 y_{t-1}-...-\phi_p y_{t-p})^2
=\sum_{t=1}^n(y_t-\bphi'\ym)^2\notag\\
&=\sum_{t=1}^n y_t^2-2\bphi'\sum_{t=1}^n\ym y_t+\bphi'\sum_{t=1}^n\ym\ym'\bphi.\label{A}
\end{align}
Now, because of (\ref{reg}) and (\ref{LSE}), the (observed) MLE of $\bphi$ is
\begin{equation}
\hat\bphi_{obs}=\left(\sum_{t=1}^n\ym\ym'\right)^{-1}\sum_{t=1}^n\ym y_t,\label{MLE}
\end{equation}
which implies that (\ref{A}) may be rewritten as
\begin{align}
A&=\sum_{t=1}^n y_t^2-\bphi'\sum_{t=1}^n\ym\ym'(2\hat\bphi_{obs}-\bphi).\label{Ab}
\end{align}

\subsection{Implied non informative prior}

In the following, we will use $O_p(n^{-\alpha})$ to denote a term on the form $n^{-\alpha} V$, where $V$ is a non degenerate random variable. Similarly, an $o_p(n^{-\alpha})$ term tends to zero in probability as $n\to\infty$, when multiplied by $n^{\alpha}$. (For a more precise definitions, see Mann and Wald, 1943). Finally, an $O(n^{-\alpha})$ term tends to a nonzero finite constant as $n\to\infty$, when multiplied by $n^{\alpha}$.

Now, (\ref{cdens}) and (\ref{arlik}) yield the large sample implied prior (cf (\ref{impprior}))
\begin{equation}
\frac{\tilde c(\bphi)}{L(\bphi,\sigma^2)}=\frac{n^{p/2}}{|\hat\Omega|^{1/2}}(2\pi)^{(n-p)/2}(\sigma^2)^{n/2}\exp\left(-\frac{1}{2\sigma^2}B\right),\label{frac}
\end{equation}
where
\begin{equation}
B=n\sigma^2(\hat\bphi_{obs}-\bphi)'\hat\Omega^{-1}(\hat\bphi_{obs}-\bphi)-A.\label{B}
\end{equation}

We have the following result. (For the definition of a causal model, see e.g. SS, p. 85.)

\begin{proposition}
Say that the data are generated from (\ref{ARp}) with\\ $y_0=y_{-1}=...=y_{1-p}=0$ and parameters $\bphi_0=(\phi_{10},...,\phi_{p0})'$ such that the model is causal. Then, as $n\to\infty$,
$$n^{-1}\log\left\{\frac{\tilde c(\bphi)}{L(\bphi,\sigma^2)}\right\}
=\frac{\log{(2\pi\sigma^2)}+1}{2}+\frac{p}{2}\frac{\log n}{n}+O_p(n^{-1/2}).$$
\end{proposition}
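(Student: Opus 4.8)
The plan is to take logarithms in the explicit expression (\ref{frac}), divide by $n$, and sort the resulting terms by order of magnitude. This gives
$$n^{-1}\log\left\{\frac{\tilde c(\bphi)}{L(\bphi,\sigma^2)}\right\} = \frac{p}{2}\frac{\log n}{n} - \frac{1}{2n}\log|\hat\Omega| + \frac{n-p}{2n}\log(2\pi) + \frac{1}{2}\log\sigma^2 - \frac{B}{2n\sigma^2}.$$
Here $\frac{n-p}{2n}\log(2\pi) = \frac12\log(2\pi) - \frac{p}{2n}\log(2\pi)$, and combining $\frac12\log(2\pi)$ with $\frac12\log\sigma^2$ produces the leading constant $\frac12\log(2\pi\sigma^2)$. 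Because the model is causal, $\hat\Omega\to\Omega_0$ in probability with $|\Omega_0|>0$, so $-\frac{1}{2n}\log|\hat\Omega| = O_p(n^{-1})$, while $-\frac{p}{2n}\log(2\pi) = O(n^{-1})$; both are absorbed into $O_p(n^{-1/2})$. Thus the whole statement reduces to establishing that $-B/(2n\sigma^2) = \frac12 + O_p(n^{-1/2})$, equivalently $B = -n\sigma^2 + O_p(n^{1/2})$.

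First I would rewrite $B$ algebraically. Put $S_n = \sum_{t=1}^n\ym\ym'$, $d = \hat\bphi_{obs}-\bphi$, and let $\hat\varepsilon_t = y_t-\hat\bphi_{obs}'\ym$ denote the fitted residuals. Expanding $A=\sum_t(y_t-\bphi'\ym)^2$ about the residuals and using the normal equations $\sum_t\hat\varepsilon_t\ym = {\bf 0}$ implied by (\ref{MLE}), the cross term vanishes and $A = \sum_t\hat\varepsilon_t^2 + d'S_n d$. Substituting into (\ref{B}) then yields $B = d'(n\sigma^2\hat\Omega^{-1}-S_n)d - \sum_t\hat\varepsilon_t^2$.

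The decisive step is that the matrix $n\sigma^2\hat\Omega^{-1}-S_n$ is only $O_p(n^{1/2})$, because its two $O_p(n)$ parts cancel. In the causal case the asymptotic covariance factorizes as $\Omega_0 = \sigma^2\Gamma^{-1}$, where $\Gamma = E[\ym\ym']$ is the $p\times p$ autocovariance matrix, so $n\sigma^2\Omega_0^{-1} = n\Gamma$. The delta method applied to $\hat\bphi_{obs}-\bphi_0 = O_p(n^{-1/2})$ gives $n\sigma^2\hat\Omega^{-1} = n\Gamma + O_p(n^{1/2})$, while the ergodic theorem and the central limit theorem for the sample autocovariances give $S_n = n\Gamma + O_p(n^{1/2})$; hence $n\sigma^2\hat\Omega^{-1}-S_n = O_p(n^{1/2})$, and the quadratic form $d'(n\sigma^2\hat\Omega^{-1}-S_n)d = O_p(n^{1/2})$ regardless of whether $\bphi$ is held fixed (so $d=O_p(1)$) or is close to $\bphi_0$. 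Finally, writing $\hat\varepsilon_t = \varepsilon_t+(\bphi_0-\hat\bphi_{obs})'\ym$ and using $\sum_t\varepsilon_t^2 = n\sigma^2+O_p(n^{1/2})$ shows $\sum_t\hat\varepsilon_t^2 = n\sigma^2 + O_p(n^{1/2})$, so that $B = -n\sigma^2+O_p(n^{1/2})$, which assembled with the first paragraph gives the claim.

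The hard part will be the cancellation in the third step: one must establish the identity $\Omega_0=\sigma^2\Gamma^{-1}$ and simultaneously control the rates of $n\sigma^2\hat\Omega^{-1}-n\Gamma$ and $S_n-n\Gamma$ at $O_p(n^{1/2})$, since it is precisely this cancellation that removes the potentially $O_p(n)$ contribution $d'S_n d$. A secondary technical nuisance is that the data are started from fixed zeros rather than from the stationary distribution; in the causal case the initial transient decays geometrically, so $n^{-1}S_n$ still converges to the stationary $\Gamma$ at rate $O_p(n^{-1/2})$, but this should be argued rather than taken for granted.
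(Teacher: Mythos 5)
Your proposal is correct, and after the (forced) first step of taking logs of (\ref{frac}) and reducing everything to $B=-n\sigma^2+O_p(n^{1/2})$, it handles $B$ by a genuinely different and considerably shorter route than the paper. The paper expands $A$ and $B$ directly in $\bphi$ and $\bphi_0$, replaces $n^{-1}\sum_t\ym\ym'$ by its limit $\sigma^2{\bf R}$ with ${\bf R}=\ve^{-1}\{({\bf I}_{p^2}-{\boldsymbol\Phi}_0\otimes{\boldsymbol\Phi}_0)^{-1}\ve({\bf M})\}$, and then carries out a lengthy explicit computation of the partitioned inverse of ${\bf I}_{p^2}-{\boldsymbol\Phi}_0\otimes{\boldsymbol\Phi}_0$ to verify that the a priori $\bphi$-dependent main term $\bphi_0'{\bf R}\bphi_0-{\boldsymbol\delta}_p'{\bf A}^{11}{\boldsymbol\delta}_p$ collapses to the constant $-1$; along the way it also re-derives $\Omega_0={\bf R}^{-1}$ from first principles. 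You instead complete the square around the MLE: the normal equations remove the cross term, the identity $\Omega_0=\sigma^2\Gamma^{-1}$ makes the two $O_p(n)$ matrices $n\sigma^2\hat\Omega^{-1}$ and $S_n$ cancel to $O_p(n^{1/2})$, and $\sum_t\hat\varepsilon_t^2=n\sigma^2+O_p(n^{1/2})$ supplies the constant --- which is exactly the Yule--Walker identity $\gamma_0-\bphi_0'{\boldsymbol\gamma}=\sigma^2$ that the paper's matrix algebra establishes the hard way. What your route buys is brevity and transparency (the constant $-1$ is visibly the innovation variance); what it costs is that you must import $\Omega_0=\sigma^2\Gamma^{-1}$ and the $O_p(n^{1/2})$ rate for $S_n-n\Gamma$ from the AR($p$) literature (or prove them, as the paper effectively does via its moment calculations), and you correctly flag that the zero initial condition only perturbs $E(S_n)$ by $O(1)$ --- the paper's computation of $E(\sum_t\ym\ym')$ confirms this. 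No gap; just be sure to actually supply the two cited rate results if this is written up self-contained.
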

\begin{proof}
See Appendix 3.
\end{proof}

The rest term of order $(\log n)/n$ might seem superfluous, but it indicates that the expectation of the right hand side is at least of this order. In the special case $p=1$, Larsson (2024) showed that the $O_p(n^{-1/2})$ rest term is proportional to a random variate with expectation zero. After a correction of a calculation mistake, this rest term is $n^{-1/2}h(\phi_0,\phi)U$, where $U$ is a standard normal variate and
$$h(\phi_0,\phi)=-\frac{\phi_0(1-2\phi_0\phi+\phi^2)}{(1-\phi_0^2)^{3/2}}.$$
Moreover, observe that the result of Proposition 1 also holds when $\sigma^2$ is replaced by its maximum likelihood estimator.

\newpage

\section{Simulation study}

In this section, we present a twofold simulation study. The first part investigates coverage and mean areas of the confidence regions derived by the four methods that were illustrated in section 3.2. The second part empirically checks the convergence of the implied prior to the uniform distribution as the sample size increases.

In the first part of the simulation study, we focus on the sample sizes  $n\in\{50, 100, 200, 400\}$, the true parameter values $\bphi_0=(0,\ 0)$ and\\ $\bphi_0=(0.4,\ 0.2)$ and the confidence levels 90\% and 95\%. The number of replicates is 10000 for $n=50, 100, 200$ and 5000 for $n=400$. For further details, see Appendix 2.

The results are given in Tables 1 and 2 for $\bphi_0=(0,\ 0)$ and\\ $\bphi_0=(0.4,\ 0.2)$, respectively. It is found that the finite sample confidence density method ('cd bootstrap') does not perform so well, as it gives much to low coverage probabilities for small samples, while the mean area is relatively large, in particular for $\bphi_0=(0,\ 0)$. Given Figures 5-6, the latter is not surprising. All the other regions perform reasonably well, and it is striking that the asymptotic confidence density and Wald methods yield almost identical results. Moreover, we may note that for small sample sizes, the Wald bootstrap method is slightly more accurate than the two asymptotic methods. For the asymptotic methods and the Wald bootstrap, as the sample size $n$ increases the empirical coverage approaches the nominal at a rate of about $n^{-1}$. 

\begin{table}
\begin{center}
\begin{tabular}{llllll}
\hline
Sample size&Method&\multicolumn{2}{c}{90\% region}&\multicolumn{2}{c}{95\% region}\\
\hline
&&coverage&mean area&coverage&mean area\\
\cline{3-6}
50&cd bootstrap&0.858&0.507&0.902&0.587\\
&cd asymptotic&0.879&0.270&0.934&0.351\\
&wald asymptotic&0.879&0.270&0.935&0.351\\
&wald bootstrap&0.908&0.309&0.954&0.409\\
\hline
100&cd bootstrap&0.888&0.357&0.931&0.414\\
&cd asymptotic&0.891&0.137&0.940&0.178\\
&wald asymptotic&0.890&0.137&0.940&0.178\\
&wald bootstrap&0.905&0.147&0.952&0.192\\
\hline
200&cd bootstrap&0.904&0.259&0.948&0.300\\
&cd asymptotic&0.898&0.0690&0.947&0.0900\\
&wald asymptotic&0.897&0.0690&0.946&0.0898\\
&wald bootstrap&0.904&0.0714&0.951&0.0932\\
\hline  
400&cd bootstrap&0.879&0.191&0.949&0.220
\\
&cd asymptotic&0.907&0.0348&0.955&0.0450
\\
&wald asymptotic&0.907&0.0346&0.955&0.0451
\\
&wald bootstrap&0.909&0.0352&0.956&0.0459
\\
\hline
\end{tabular}
\vskip.2cm
\caption{Coverage and mean area, $\bphi_0=(0,\ 0)$.}
\end{center}
\end{table}

\vskip.5cm
\begin{table}
\begin{center}
\begin{tabular}{llllll}
\hline
Sample size&Method&\multicolumn{2}{c}{90\% region}&\multicolumn{2}{c}{95\% region}\\
\hline
&&coverage&mean area&coverage&mean area\\
\cline{3-6}
50&cd bootstrap&0.804&0.229&0.852&0.267\\
&cd asymptotic&0.866&0.225&0.922&0.290\\
&wald asymptotic&0.876&0.229&0.931&0.296\\
&wald bootstrap&0.915&0.275&0.961&0.363\\
\hline
100&cd bootstrap&0.850&0.140&0.886&0.162\\
&cd asymptotic&0.889&0.115&0.940&0.149\\
&wald asymptotic&0.889&0.115&0.941&0.150\\
&wald bootstrap&0.912&0.129&0.955&0.169\\
\hline
200&cd bootstrap&0.913&0.0988&0.943&0.114\\
&cd asymptotic&0.900&0.0578&0.949&0.0752\\
&wald asymptotic&0.900&0.0577&0.948&0.0751\\
&wald bootstrap&0.912&0.0615&0.955&0.0803\\
\hline  
400&cd bootstrap&0.928&0.0698&0.954&0.0808
\\
&cd asymptotic&0.897&0.0290&0.948&0.0377
\\
&wald asymptotic&0.898&0.0288&0.948&0.0376
\\
&wald bootstrap&0.903&0.0298&0.952&0.0389
\\
\hline
\end{tabular}
\vskip.2cm
\caption{Coverage and mean area, $\bphi_0=(0.4,\ 0.2)$.}
\end{center}
\end{table}

The second part of the simulation study concerns calculation of the empirical implied prior. In Figures 7-9, for $\bphi=(0,0)$ we give contour plots of
\begin{equation}n^{-1}\log\left\{\frac{\tilde c(\bphi)}{L(\bphi,\sigma^2)}\right\}
-\frac{\log{(2\pi\sigma^2)}+1}{2},\label{logimp}
\end{equation}
for dimension $p=2$. The number of replications is 200 for $n=200, 400$ and 100 for $n=800$. What is depicted is the mean of (\ref{logimp}) over the replications. (See also Appendix 1 where corresponding plots for $\bphi=(0.4,0.2)$ are shown.) We observe that the empirical priors approach their asymptotic limits well as $n$ increases, at a rate which seems to be of an order close to $n^{-1}$ overall. Thus, comparing to proposition 1, it seems that the $O_p(n^{-1/2})$ term vanishes when taking the expectation.

\begin{figure}[htb]
\begin{center}
\includegraphics[width=90mm]{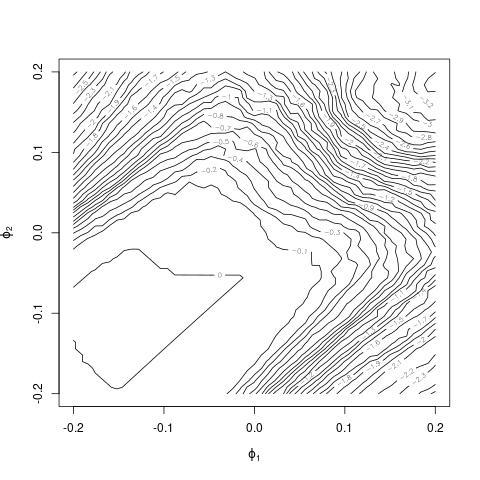}
\end{center}
\caption{Contour plot of the log implied prior minus the main term of proposition 1, $n=200$, $\bphi=(0,0)$, mean over 200 replications.}
\end{figure}

\begin{figure}[htb]
\begin{center}
\includegraphics[width=90mm]{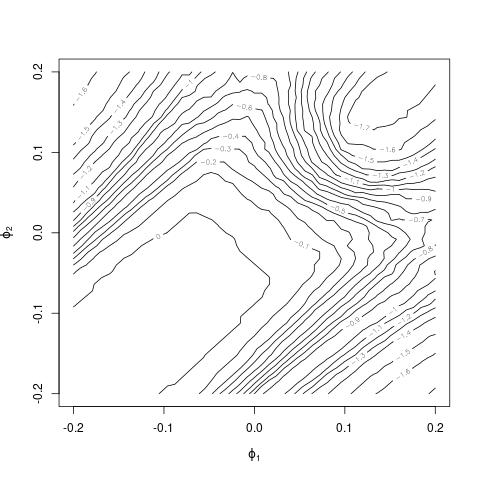}
\end{center}
\caption{Contour plot of the log implied prior minus the main term of proposition 1, $n=400$, $\bphi=(0,0)$, mean over 200 replications.}
\end{figure}

\begin{figure}[htb]
\begin{center}
\includegraphics[width=90mm]{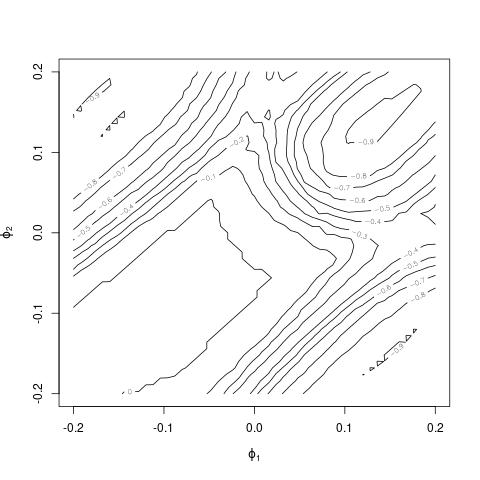}
\end{center}
\caption{Contour plot of the log implied prior minus the main term of proposition 1, $n=800$, $\bphi=(0,0)$, mean over 100 replications.}
\end{figure}

\section{Empirical examples}
Our first example is collected from the textbook Wei (2006), p. 111. It describes the daily average number of defects found at the end of the assembly line of a truck manufacturing plant. The data set contains 45 observations. The series is depicted in Figure 10. 

\begin{figure}[htb]
\begin{center}
\includegraphics[width=90mm]{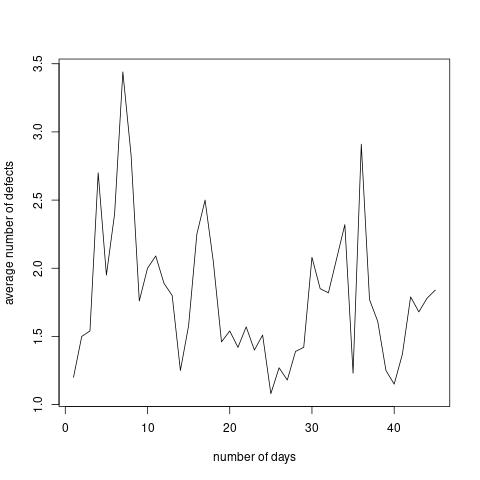}
\end{center}
\caption{Daily average number of defects found at the end of the assembly line of a truck manufacturing plant.}
\end{figure}

In order to get model residuals with a better resemblance to the normal distribution, the original series $y_t$ was transformed according to
$$z_t=\log\left(y_t-\min_t y_t+0.1\right).$$
The model for $z_t$ without intercept with the smallest AIC was found to be an AR(2), and using the standard R procedure, the maximum likelihood estimates of the parameters were $\hat\phi_1=0.4719$ and $\hat\phi_2=0.2262$, with standard errors 0.1470 and 0.1475, respectively. The ADF unit root t test with two lags resulted in a p value of 0.072, thus quite barely not rejecting the null of non stationarity at the 5\% level.

Using the previously discussed four methods, as well as two more methods which will be explained in the sequel, Figure 11 depicts 95\% confidence regions for $\phi_1$ and $\phi_2$. Similar to the illustrations in Figures 5 and 6, we find that the region based on the confidence density and bootstrap differs a bit from those based on the asymptotic confidence density or the Wald statistic (both the one based on $\chi^2$ critical values and the one based on critical values obtained by bootstrap). It is interesting to note that all confidence regions cover parameters on the $\phi_1+\phi_2=1$ line. Since the model is non stationary for such values, this means that non stationarity can not be ruled out. This fact corroborates with the p value $0.072>0.05$ of the unit root test.

The Bayes region of Figure 11 is obtained by assuming a uniform prior for $(\phi_1,\phi_2)$. In general, if $g(\theta)$ is the prior density for the parameter $\theta$ and $L(\theta;{\bf y})$ is the likelihood obtained from the data ${\bf y}$, then the posterior density, $h(\theta)$ say, is obtained from
\begin{equation}
h(\theta)=L(\theta)g(\theta).\label{post}
\end{equation}
Hence, in case the prior $g(\theta)$ is uniform, the posterior density is proportional to the likelihood, and it may be constructed by dividing the likelihood with the integral of the likelihood over the whole parameter space. In our example, the parameter space is defined by those $(\phi_1,\phi_2)$ combinations that make the model causal (i.e. stationary for a suitable choice of initial distribution). This region is given by (\ref{statreg}).
Hence, just as in the case of the asymptotic confidence density region, we obtain the $1-\alpha$ ‘confidence’ (or credibility) Bayes region by finding a constant $K$ such that $\{h(\phi_1,\phi_2)>K\}$ defines a region in the $\phi_1\times\phi_2$ plane such that the volume under $h(\phi_1,\phi_2)$ over this region equals $1-\alpha$, where $h(\phi_1,\phi_2)$ is the likelihood, normed as described above.

Upon inspection of Figure 11, the Bayes region is found to be larger than the other regions, and in fact, it contains all of them. Comparing to the corresponding AR(1) situation (cf Larsson, 2024), it seems that it can be wise to try a prior which is a mixture of a flat prior and a spike at those $(\phi_1,\phi_2)$ that fulfill $\phi_1+\phi_2=1$, i.e. those parameter combinations which, in view of Figure 11, are on the most interesting part of the border of the stationarity region. Say that the posterior based on the likelihood and a flat prior, $h(\bphi)$ say, fulfills
$$\int\int_{\bphi<\bar\bphi}h(\bphi)d\bphi=1-k,$$
where $\bar\bphi=(\bar\phi_1,\bar\phi_2)$ is defined as the set of $\bphi=(\phi_1,\phi_2)$ such that $\phi_1+\phi_2=1$, and $\bphi<\bar\bphi$ means that $\phi_1<\bar\phi_1$ and $\phi_2<\bar\phi_2$. This is equivalent to say that $h(\bphi)$ has a spike of size $b$ at the boundary. Next, with similar notation, taking one of the other confidence densities (e.g. the one that builds on the Wald statistic and its asymptotic distribution) $c(\bphi)$, with a spike of size $b$ at the boundary, we get
$$\int\int_{\bphi<\bar\bphi}c(\bphi)d\bphi=1-b.$$
To adjust for the difference between $b$ and $k$, a simplistic approach is to find a constant $a$ such that for a corrected posterior $\tilde h(\bphi)=ah(\bphi)$, 
$$1-b=\int\int_{\bphi<\bar\bphi}\tilde h(\bphi)d\bphi
=a\int\int_{\bphi<\bar\bphi}h(\bphi)d\bphi=a(1-k),$$
i.e. $a=(1-b)/(1-k)$. 

Now, say that the "uncorrected" Bayes region is given by those $\bphi$ such that $h(\bphi)>K_{1-\alpha}$, where $K_{1-\alpha}$ solves
$$\int\int_{h(\bphi)>K_{1-\alpha}}h(\bphi)d\bphi=1-\alpha,$$
while the "corrected" Bayes region is similarly given by 
those $\bphi$ such that $\tilde h(\bphi)>\tilde K_{1-\alpha}$, where $\tilde K_{1-\alpha}$ solves
$$\int\int_{\tilde h(\bphi)>\tilde K_{1-\alpha}}\tilde h(\bphi)d\bphi=1-\alpha.$$
Then, from $\tilde h(\bphi)=ah(\bphi)$, the latter is equivalently given by those $\bphi$ such that $h(\bphi)>\tilde K_{1-\alpha}/a$, where $\tilde K_{1-\alpha}$ solves
$$\int\int_{h(\bphi)>\tilde K_{1-\alpha}/a} h(\bphi)d\bphi=\frac{1-\alpha}{a}.$$
In case $a>1$ (i.e. $k>b$), the corrected region will become smaller than the uncorrected one, which is reasonable since the uncorrected region in a sense underestimates the size of the "spike" at the boundary. This is also what we see from Figure 11, where the corrected Bayes region comes fairly close to the regions obtained from the asymptotic confidence distribution and the two Wald based regions. 

\begin{figure}[htb]
\begin{center}
\includegraphics[width=90mm]{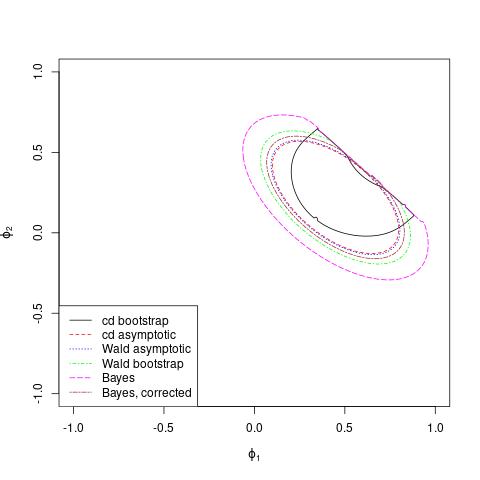}
\end{center}
\caption{95\% confidence regions for the parameters, truck data.}
\end{figure}

Our second example is the R dataset \texttt{LakeHuron}, also given in Brockwell and Davies (2002). This data describes the level of Lake Huron in feet for the years 1875-1972 (98 years). The series is depicted in Figure 12. After subtracting the mean, we fitted an AR(2) model to the data with parameter estimates $\hat\phi_1=1.0441$ and $\hat\phi_2=-0.2503$ and corresponding standard errors 0.0982 and 0.1006, respectively. The ADF unit root test with two lags gave a p value of 0.063, so non stationarity is not rejected at the 5\% level, however close.

\begin{figure}[htb]
\begin{center}
\includegraphics[width=90mm]{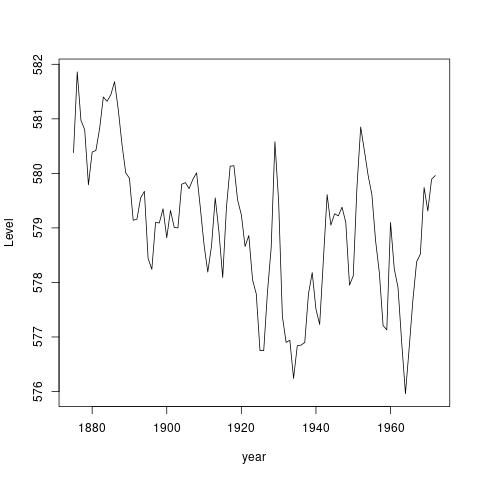}
\end{center}
\caption{Level of Lake Huron in feet for the years 1875-1972.}
\end{figure}

The corresponding confidence regions are displayed in Figure 13. Since the Bayes region did not touch the $\phi_1+\phi_2=1$ line, no correction was worked out. As in the previous example, the (uncorrected) Bayes region contains all the other regions, and among these, the confidence density bootstrap region stands out a little, while the others are very similar.

It is perhaps surprising that, although non stationarity was not rejected, the regions are all inside of the stationarity region. But such things can happen, since the unit root test is evaluated on a different statistical model than the one used for creating confidence regions.

\begin{figure}[htb]
\begin{center}
\includegraphics[width=90mm]{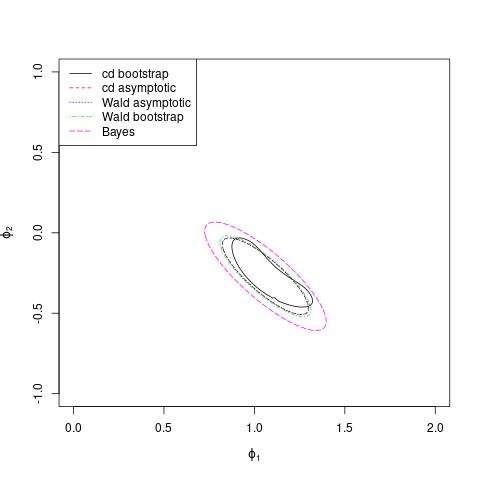}
\end{center}
\caption{95\% confidence regions for the parameters, Lake Huron data.}
\end{figure}

\section{Concluding remarks}

In the present paper, we have suggested a construction of the joint confidence distribution function for several parameters, and applied it to the case of a zero mean autoregressive process of general order $p$. We have proved that asymptotically, the implied prior is flat, thus generalizing the corresponding result in Larsson (2024) for zero mean autoregressive processes of order one. 

Moreover, for the case $p=2$, we have discussed different approaches to calculate confidence curves and confidence regions, and compared them numerically, both in terms of simulations and on real data. These comparisons showed that, in terms of coverage and area of confidence regions, our method of calculating the small sample confidence density in this sense worked less well than a method based on the asymptotic confidence density and methods based on the Wald statistic. The latter three methods show a fairly similar and good performance. 

In future studies, it would be vital to theoretically back up our general construction of confidence distribution functions for several parameters. Also, studying other models than autoregressive ones would be of great interest. 

\section*{Appendix 1: Extra figures}

\begin{figure}[htb]
\begin{center}
\includegraphics[width=90mm]{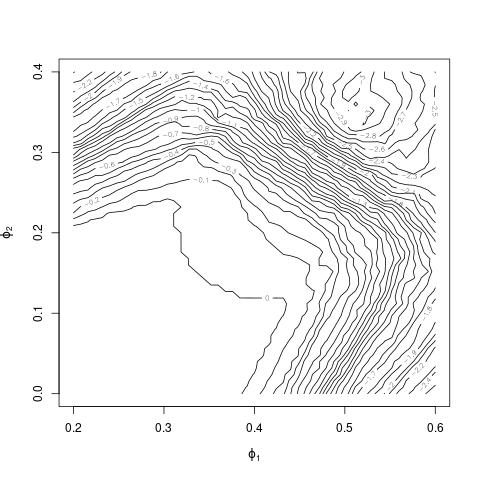}
\end{center}
\caption{Contour plot of the log implied prior minus the main term of proposition 1, $n=200$, $\bphi=(0.4,0.2)$, mean over 200 replications.}
\end{figure}

\begin{figure}[htb]
\begin{center}
\includegraphics[width=90mm]{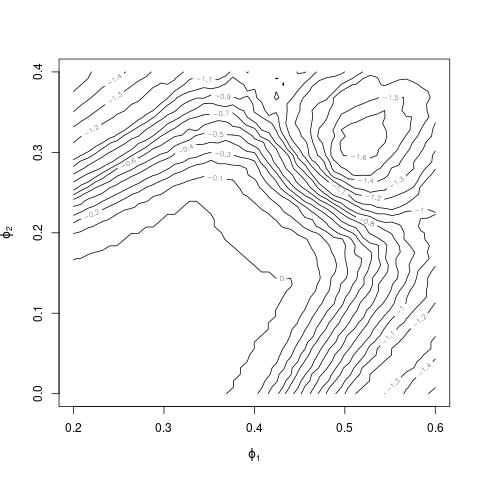}
\end{center}
\caption{Contour plot of the log implied prior minus the main term of proposition 1, $n=400$, $\bphi=(0.4,0.2)$, mean over 200 replications.}
\end{figure}

\begin{figure}[htb]
\begin{center}
\includegraphics[width=90mm]{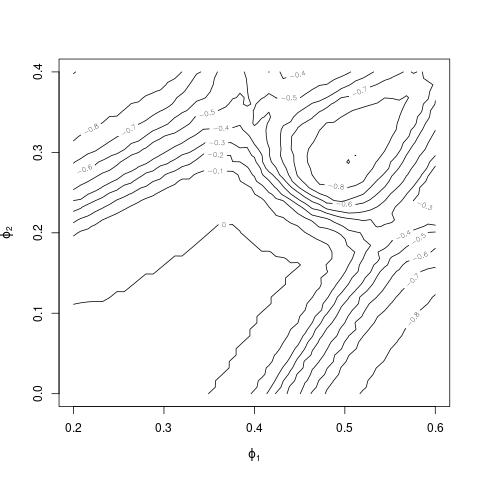}
\end{center}
\caption{Contour plot of the log implied prior minus the main term of proposition 1, $n=800$, $\bphi=(0.4,0.2)$, mean over 100 replications.}
\end{figure}

\section*{Appendix 2: Simulation setups}

In this appendix, we briefly describe the numerical calculations and simulations that give rise to the figures and tables of the paper. In all cases, basically all calculations were performed in Matlab R2019a, and all  plots were done in R, version 4.1.2.

\subsection*{Figures 1-4}

This simulation generates the confidence curves based on the Wald statistic using the asymptotic distribution or bootstrap. For $j=1,2$, the confidence is calculated on a grid defined by $\phi_j=(\phi_{j,min},\phi_{j,min}+h_j,...,\phi_{j,max})$, where $h_j=(\phi_{j,max}-\phi_{j,min})/m$ for some large number $m$. Here, it is important to make sure that the grid only contains such $(\phi_1,\phi_2)$ that satisfy the stationarity conditions.

\begin{enumerate}
\item Given $n$, calculate $h_j$ as given above.
\item For given $\bphi_0$, generate the series and calculate $\hat\bphi_{obs}$.
\item Calculate the bootstrap distribution of the Wald statistic by repeating the following for $k=1,2,...,N$, where $N$ is a large number:
\begin{enumerate}
\item Bootstrap the residuals under the model with $\bphi=\hat\bphi_{obs}$ by drawing them with replacement a large number of times.
\item Given the bootstrapped residuals and $\bphi=\hat\bphi_{obs}$, generate a new series and use that series to calculate the estimate $\hat\bphi_k$ and the corresponding value of the Wald statistic, $Q_k$ say.
\end{enumerate}
The so obtained collection $Q_1,...,Q_N$ constitutes the bootstrap distribution of $Q$.
\item For all 'true parameters' $(\phi_1,\phi_2)$ in the grid, do
\begin{enumerate}
\item Given the 'true parameters' and $\hat\bphi_{obs}$, calculate the Wald statistic $Q_{obs}$.
\item Calculate the asymptotic confidence as $F(Q_ {obs})$, where $F$ is the distribution function of the asymptotic $\chi^2$ distribution.
\item Calculate the bootstrap confidence as the proportion of $Q_k$ that are smaller than $Q_{obs}$.
\end{enumerate}
\item Plot the confidence curves.
\end{enumerate}

\subsection*{Figures 5-6}

The simulations lying behind Figures 5-6 calculate the Wald type confidence regions as above. In addition to that, we need to calculate regions based on the finite sample and asymptotic confidence densities. 

For the asymptotic confidence density based region, the procedure is as above where the asymptotic confidence density is computed for all grid parameters under step 4 above. After that,
\begin{enumerate}
\item The confidence density is normed in order for the total volume under it to equal one. 
\item For all parameters in the grid and given the confidence level $1-\alpha$, by numerical equation solving we calculate $K_{1-\alpha}$ such that the the volume under the part of the confidence density where it is larger than $K_{1-\alpha}$ equals $1-\alpha$. The Matlab command used for this is called \texttt{fzero}.
\item The contour consisting of all $K_{1-\alpha}$ of the previous step may be plotted.
\end{enumerate}

The finite sample confidence region is trickier to obtain. At first, for all grid parameters the confidence distribution function (cdf) is calculated under step 4 (for Figures 1-4) above. For all grid parameters, this is done as follows:
\begin{enumerate}
\item Repeat the following for $k=1,2,...,N$, where $N$ is a large number:
\begin{enumerate}
\item Generate the time series using the 'true' grid parameters\\ $\bphi=(\phi_1,\phi_2)$ and innovations simulated from the standard normal distribution.
\item Calculate $\hat\bphi=(\hat\phi_1,\hat\phi_2)$ from the generated series.
\item Check if  $\hat\phi_1>\phi_1$ and $\hat\phi_2>\phi_2$.
\end{enumerate}
\item The cdf at $\bphi=(\phi_1,\phi_2)$ is estimated as the proportion of times that $\hat\phi_1>\phi_1$ and $\hat\phi_2>\phi_2$ above.
\end{enumerate}

We now have the cdf for all grid parameters, and from that, we want a numerical estimate of the confidence density (cd). This is a non trivial problem, see e.g. Knowles and Renka (2014). We have chosen to generalize the idea of Larsson (2024) to fit an explicit function to the cdf and then obtain the cd by analytical differentiation. This is done as follows:

\begin{enumerate}
\item For all $(\phi_1,\phi_2)$ in the grid, calculate $\Phi^{-1}\{C(\phi_1,\phi_2)\}$, where $C(\cdot,\cdot)$ is the cdf calculated above and $\Phi^{-1}$ is the inverse of the standard normal distribution function.
\item Estimate the regression 
$$\Phi^{-1}\{C(\phi_1,\phi_2)\}=c_0+c_1\phi_1+c_2\phi_2+c_{11}\phi_1^2+c_{22}\phi_2^2+c_{12}\phi_1\phi_2+\epsilon.$$
\item Calling the estimated parameters $\hat c_0$ etcetera, estimate the cd as
$$\frac{\partial^2}{\partial\phi_1\partial\phi_2}
\Phi(z)=\{\hat c_{12}-(\hat c_1+2\hat c_{11}\phi_1+\hat c_{12}\phi_2)(\hat c_2+2\hat c_{22}\phi_2+\hat c_{12}\phi_1)z\}\varphi(z),$$
where
$$z=\hat c_0+\hat c_1\phi_1+\hat c_2\phi_2+\hat c_{11}\phi_1^2+\hat c_{22}\phi_2^2+\hat c_{12}\phi_1\phi_2$$
and $\varphi(\cdot)$ is the standard normal density function.
\end{enumerate}

We also tried several other regressions, e.g. of the same type as here but including higher powers of the parameters as regressors, but the results did not improve.

Another remarks is that, to avoid numerical problems, when estimating the regression only those $\phi_1,\phi_2$ that fulfilled $\delta<C(\phi_1,\phi_2)<1-\delta$ are used, where $\delta$ is the minimum of 0.1 and
$$\exp(6.04-2.64\log n+4.39\phi_1+9.92\phi_2).$$
The motivation for this choice is a regression of the squared volume under the empirical cd on $\log\delta$ and $\log n$ for some choice of $n$, $\phi_1$ and $\phi_2$ values, and then solving for the parameters for a square volume of 2. (Not 1, as one should think, but by trial and error, this seems to be a more robust choice.)

\subsection*{Tables 1-2}

The code generating Tables 1-2 builds on the codes above. For each replicate in an outer loop, it adds area calculations and checks that the true parameters are inside the confidence regions. Then, the mean is taken over the replicates in this loop.

\subsection*{Figures 7-9, 14-16}

For each replicate in an outer loop, the code for Figures 7-9, 14-16 calculates the confidence density as for Figures 5-6 above. It also calculates the likelihood, and then the log of their ratio. The mean is taken over all replicates in the outer loop.

A key thing here is that the confidence density can become too close to zero, and then taking the logarithm does not work. So it has to be truncated below at some value, in our case $10^{-323}$. This truncation limit has to be taken as small as possible, since preliminary studies show that this choice can affect the implied prior quite much for parameters in the grid far from the generating ones.

\subsection*{Figures 11, 13}

The code for Figures 11 and 13 is as for the Figures 5-6 code, differing only in that the empirical series is used as input, rather than a simulated one.

\section*{Appendix 3: Proof of proposition 1}

Let the vector of true parameters be $\bphi_0$. By e.g. SS, p. 125, we have that as $n\to\infty$, denoting convergence in distribution by $\dlim$,  
$$\sqrt{n}(\hat\bphi_{obs}-\bphi_0)\dlim {\bu},$$
for a normally distributed variate $\bu$ with expectation zero and covariance matrix $\Omega$. We write this as
\begin{equation}
\hat\bphi_{obs}=\bphi_0+O_p(n^{-1/2}).\label{phihat}
\end{equation}
Now, (\ref{phihat}) implies
\begin{align}
&(\hat\bphi_{obs}-\bphi)'\hat\Omega^{-1}(\hat\bphi_{obs}-\bphi)\notag\\
&=\left\{(\hat\bphi_{obs}-\bphi_0)+(\bphi_0-\bphi)\right\}'
\hat\Omega^{-1}\left\{(\hat\bphi_{obs}-\bphi_0)+(\bphi_0-\bphi)\right\}
\notag\\
&=(\bphi_0-\bphi)'\hat\Omega^{-1}(\bphi_0-\bphi)
+O_p(n^{-1/2}).
\label{phiobsphi}
\end{align}
Similarly, we have
\begin{equation}
2\hat\bphi_{obs}-\bphi=2\bphi_0-\bphi+O_p(n^{-1/2}).\label{phiobsphi2}
\end{equation}
Moreover, define (let $q=p-1$) 
$$\Phi_0=\left(\begin{array}{cc}
\begin{array}{cccc}\phi_{10}&\phi_{20}&\ldots&\phi_{q0}\end{array}
&\phi_{p0}\\{\bf I}_q&{\bf 0}_q\end{array}\right),
\quad \te_t=\left(\begin{array}{c}\varepsilon_t\\{\bf 0}_q\end{array}\right),$$
where for all integer $q$, ${\bf I}_q$ is the identity matrix and ${\bf 0}_q$ is the zero vector of dimension $q$.
We have by recursion that (put $j=t-1-i$ in the last step)
\begin{align}
\ym&=\Phi_0{\bf y}_{t-1,-}+\te_{t-1}
=\Phi_0\left(\Phi_0{\bf y}_{t-2,-}+\te_{t-2}\right)+\te_{t-1}\notag\\
&=\Phi_0^2{\bf y}_{t-2,-}+\Phi_0\te_{t-2}+\te_{t-1}=\ldots=
\sum_{i=0}^{t-2}\Phi_0^i\te_{t-1-i}\notag\\
&=\sum_{j=1}^{t-1}\Phi_0^{t-1-j}\te_j,\label{ym}
\end{align}
implying
\begin{equation}
\sum_{t=1}^n\ym\ym'=\sum_{t=1}^n\sum_{i=1}^{t-1}\sum_{j=1}^{t-1}\Phi_0^{t-1-i}\te_i(\te_j)'(\Phi_0')^{t-1-j}.\label{sumy}
\end{equation}
Observing that $E\{\te_i(\te_j)'\}=\sigma^2{\bf M}$ if $i=j$ and zero otherwise, where (${\bf 0}_{q\times q}$ is a $q\times q$ zero matrix)
$${\bf M}=\left(\begin{array}{cc}1&{\bf 0}_q'\\{\bf 0}_q&{\bf 0}_{q\times q}\end{array}\right),$$
(\ref{sumy}) yields
$$E\left(\sum_{t=1}^n\ym\ym'\right)=
\sigma^2\sum_{t=1}^n\sum_{i=1}^{t-1}\Phi_0^{t-1-i}{\bf M}(\Phi_0')^{t-1-i},$$
and utilizing the identity
\begin{equation}
\ve({\bf A}{\bf B}{\bf C})=({\bf C}'\otimes{\bf A})\ve({\bf B}),
\label{vecid}
\end{equation}
for matrices ${\bf A}$, ${\bf B}$, ${\bf C}$ of dimensions such that all products are defined, and putting $\bPhi=\Phi_0\otimes\Phi_0$, we get
\begin{align}
E\left\{\ve\left(\sum_{t=1}^n\ym\ym'\right)\right\}
&=\sigma^2\sum_{t=1}^n\sum_{i=1}^{t-1}\ve\left\{\Phi_0^{t-1-i}{\bf M}(\Phi_0')^{t-1-i}\right\}\notag\\
&=\sigma^2\sum_{t=1}^n\sum_{i=1}^{t-1}(\Phi_0^{t-1-i}\otimes\Phi_0^{t-1-i})\ve({\bf M})\notag\\
&=\sigma^2\sum_{t=1}^n\sum_{i=1}^{t-1}\bPhi^{t-1-i}\ve({\bf M})\notag\\
&=\sigma^2\sum_{i=1}^{n-1}\left(\sum_{t=i+1}^n\bPhi^t\right)\bPhi^{-1-i}\ve({\bf M}).\label{Evsumy}
\end{align}
Next, from the identity
$$\sum_{t=u}^v{\bf C}^t=({\bf I}-{\bf C})^{-1}({\bf C}^u-{\bf C}^{v+1}),$$
where ${\bf C}$ is a matrix such that ${\bf I}-{\bf C}$ is non singular, and where ${\bf I}$ is the identity matrix, (\ref{Evsumy}) implies
\begin{align}
&E\left\{\ve\left(\sum_{t=1}^n\ym\ym'\right)\right\}\notag\\
&=\sigma^2({\bf I}_{p^2}-\bPhi)^{-1}\sum_{i=1}^{n-1}
(\bPhi^{i+1}-\bPhi^{n+1})\bPhi^{-1-i}\ve({\bf M})\notag\\
&=\sigma^2({\bf I}_{p^2}-\bPhi)^{-1}\left\{(n-1){\bf I}_{p^2}-
\sum_{i=1}^{n-1}\bPhi^{n-i}\right\}\ve({\bf M})\notag\\
&=n\sigma^2({\bf I}_{p^2}-\bPhi)^{-1}\ve({\bf M})+O(1).\label{Evsumy2}
\end{align}

As for the corresponding variance, (\ref{sumy}) implies
\begin{align*}
&\left(\sum_{s=1}^n\ym\ym'\right)\left(\sum_{t=1}^n\ym\ym'\right)'\\
&=\sum_{s=1}^n\sum_{t=1}^n\sum_{i=1}^{s-1}\sum_{j=1}^{s-1}\sum_{k=1}^{t-1}\sum_{l=1}^{t-1}\Phi_0^{s-1-i}\te_i(\te_j)'(\Phi_0')^{s-1-j}
\Phi_0^{t-1-k}\te_k(\te_l)'(\Phi_0')^{t-1-l},
\end{align*}
and applying (\ref{vecid}) with ${\bf B}={\bf I}_p$, we obtain
\begin{align}
&\ve\left\{\left(\sum_{s=1}^n\ym\ym'\right)\left(\sum_{t=1}^n\ym\ym'\right)'\right\}\notag\\
&=\sum_{s=1}^n\sum_{t=1}^n\sum_{i=1}^{s-1}\sum_{j=1}^{s-1}\sum_{k=1}^{t-1}\sum_{l=1}^{t-1}\notag\\
&\left[\left\{\Phi_0^{t-1-l}\te_l(\te_k)'(\Phi_0')^{t-1-k}\right\}\otimes
\left\{\Phi_0^{s-1-i}\te_i(\te_j)'(\Phi_0')^{s-1-j}\right\}\right]\ve({\bf I}_p)
\notag\\
&=\sum_{s=1}^n\sum_{t=1}^n\sum_{i=1}^{s-1}\sum_{j=1}^{s-1}\sum_{k=1}^{t-1}\sum_{l=1}^{t-1}\left[\left(\Phi_0^{t-1-l}\otimes\Phi_0^{s-1-i}\right)
\left\{\te_l(\te_k)'\otimes\te_i(\te_j)'\right\}\right.\notag\\
&\left.\left\{(\Phi_0')^{t-1-k}\otimes(\Phi_0')^{s-1-j}\right\}\right]\ve({\bf I}_p).
\label{vsumyy}
\end{align}
To take the expectation in (\ref{vsumyy}), we find that since
\begin{align*}
\te_l(\te_k)'\otimes\te_i(\te_j)'&=
\left(\begin{array}{c}\varepsilon_l\\{\bf 0}_q\end{array}\right)
\left(\begin{array}{cc}\varepsilon_k&{\bf 0}_q'\end{array}\right)\otimes
\left(\begin{array}{c}\varepsilon_i\\{\bf 0}_q\end{array}\right)
\left(\begin{array}{cc}\varepsilon_j&{\bf 0}_q'\end{array}\right)\\
&=\left(\begin{array}{cc}\varepsilon_l\varepsilon_k&{\bf 0}_q'\\{\bf 0}_q&{\bf 0}_{q\times q}\end{array}\right)\otimes
\left(\begin{array}{cc}\varepsilon_l\varepsilon_k&{\bf 0}_q'\\{\bf 0}_q&{\bf 0}_{q\times q}\end{array}\right)
\\
&=\varepsilon_l\varepsilon_k\varepsilon_i\varepsilon_j{\bf M}^{\otimes 2},
\end{align*}
where we use the notation ${\bf M}^{\otimes 2}={\bf M}\otimes{\bf M}$,
we have
$$E\left\{\te_l(\te_k)'\otimes\te_i(\te_j)'\right\}
=\left\{\begin{array}{cl}3\sigma^4{\bf M}^{\otimes 2},&\text{if}\ i=j=l=k,\\
\sigma^4{\bf M}^{\otimes 2},&\text{if}\ i,j,k,l\ \text{are pairwise equal,}\\
0&\text{otherwise.}\end{array}\right.$$
This means that 
\begin{align}
&E\left[\ve\left\{\left(\sum_{s=1}^n\ym\ym'\right)\left(\sum_{t=1}^n\ym\ym'\right)'\right\}\right]\notag\\
&=\sigma^4(3S_1+S_2+2S_3)\ve({\bf I}_p),\label{vsumyy2}
\end{align}
where, denoting the minimum of $s$ and $t$ by $s\wedge t$,
\begin{align}
&S_1\notag\\
&=\sum_{s=1}^n\sum_{t=1}^n\sum_{i=1}^{(s\wedge t)-1}
\left(\Phi_0^{t-1-i}\otimes\Phi_0^{s-1-i}\right){\bf M}^{\otimes 2}
\left\{(\Phi_0')^{t-1-i}\otimes(\Phi_0')^{s-1-i}\right\},\label{S1}
\end{align}
\begin{align}
S_2
&=\sum_{s=1}^n\sum_{t=1}^n\sum_{i=1}^{s-1}\sum_{k=1}^{t-1}
\notag\\&\left(\Phi_0^{t-1-k}\otimes\Phi_0^{s-1-i}\right){\bf M}^{\otimes 2}
\left\{(\Phi_0')^{t-1-k}\otimes(\Phi_0')^{s-1-i}\right\},\label{S2}
\end{align}
and
\begin{align}
S_3
&=\sum_{s=1}^n\sum_{t=1}^n\sum_{i=1}^{(s\wedge t)-1}\sum_{j=1}^{(s\wedge t)-1}\notag\\
&\left(\Phi_0^{t-1-j}\otimes\Phi_0^{s-1-i}\right){\bf M}^{\otimes 2}
\left\{(\Phi_0')^{t-1-i}\otimes(\Phi_0')^{s-1-j}\right\}.\label{S3}
\end{align}
We start by looking at $S_2$. In the usual manner, we get
\begin{align}
S_2&=\sum_{s=1}^n\sum_{t=1}^n\sum_{i=1}^{s-1}\sum_{k=1}^{t-1}
\left\{(\Phi_0^{t-1-k}{\bf M}(\Phi_0')^{t-1-k}\right\}\otimes
\left\{(\Phi_0^{s-1-i}{\bf M}(\Phi_0')^{s-1-i}\right\}\notag\\
&=\left\{\sum_{t=1}^n\sum_{k=1}^{t-1}\Phi_0^{t-1-k}{\bf M}(\Phi_0')^{t-1-k}\right\}^{\otimes 2},\label{S2b}
\end{align}
where, denoting the inverse vec operator by $\ve^{-1}$,
\begin{align}
&\sum_{t=1}^n\sum_{k=1}^{t-1}\Phi_0^{t-1-k}{\bf M}(\Phi_0')^{t-1-k}
\notag\\
&=\ve^{-1}\left[\sum_{t=1}^n\sum_{k=1}^{t-1}
\ve\left\{\Phi_0^{t-1-k}{\bf M}(\Phi_0')^{t-1-k}\right\}\right]\notag\\
&=\ve^{-1}\left\{\sum_{t=1}^n\sum_{k=1}^{t-1}\bPhi^{t-1-k}\ve({\bf M})\right\}=\ve^{-1}\left\{\sum_{k=1}^{n-1}\sum_{t=k+1}^n\bPhi^{t-1-k}\ve({\bf M})\right\}\notag\\
&=\ve^{-1}\left\{\sum_{k=1}^{n-1}\sum_{u=0}^{n-1-k}\bPhi^u\ve({\bf M})\right\}\notag\\
&=\ve^{-1}\left\{({\bf I}_{p^2}-\bPhi)^{-1}\sum_{k=1}^{n-1}({\bf I}_p-\bPhi^{n-k})\ve({\bf M})\right\}\notag\\
&=n\ve^{-1}\left\{({\bf I}_{p^2}-\bPhi)^{-1}\ve({\bf M})\right\}+O(1).
\label{sumPhiMPhi}
\end{align}
Hence, via (\ref{S2b}),
\begin{equation}
S_2=n^2\left[\ve^{-1}\left\{({\bf I}_{p^2}-\bPhi)^{-1}\ve({\bf M})\right\}\right]
^{\otimes 2}+O(n).\label{S2c}
\end{equation}

Next, we will show that $S_1$ and $S_3$ are at most $O(n)$. Let $||{\bf \cdot}||$ be some matrix norm. We will utilize the equality 
$||{\bf A}\otimes{\bf B}||=||{\bf A}||||{\bf B}||$ (see Lancaster and Farahat, 1972, theorem 8) and the well-known triangle inequality\\ 
$||{\bf A}+{\bf B}||\leq||{\bf A}||+||{\bf B}||$.

As for $S_1$, (\ref{S1}) yields, writing $z=||\Phi_0||$ and using that\\ 
$s+t-2(s\wedge t)=|s-t|$,
\begin{align}
&||S_1||\notag\\
&\leq\sum_{s=1}^n\sum_{t=1}^n\sum_{i=1}^{(s\wedge t)-1}
\left|\left|\left(\Phi_0^{t-1-i}\otimes\Phi_0^{s-1-i}\right){\bf M}^{\otimes 2}
\left\{(\Phi_0')^{t-1-i}\otimes(\Phi_0')^{s-1-i}\right\}\right|\right|\notag\\
&=||{\bf M}||^2\sum_{s=1}^n\sum_{t=1}^n\sum_{i=1}^{(s\wedge t)-1}
z^{2(s+t)-4(i+1)}\notag\\
&=||{\bf M}||^2\frac{z^{-4}}{z^{-4}-1}\sum_{s=1}^n\sum_{t=1}^n
z^{2(s+t)}\left(z^{-4(s\wedge t)}-z^{-4}\right)\notag\\
&=||{\bf M}||^2\frac{1}{1-z^{4}}\left(\sum_{s=1}^n\sum_{t=1}^n
z^{2|s-t|}-\sum_{s=1}^n\sum_{t=1}^nz^{2(s+t)-4}\right).
\label{S1n}
\end{align}
Here, because $|z|<1$,
\begin{equation}
\sum_{s=1}^n\sum_{t=1}^nz^{2|s-t|}
=n+2\sum_{s=1}^n\sum_{t=1}^{s-1}z^{2(s-t)}=n+O(1),\label{sumz1}
\end{equation}
and
\begin{equation}
\sum_{s=1}^n\sum_{t=1}^nz^{2(s+t)-4}=O(1),\label{sumz2}
\end{equation}
which via (\ref{S1n}) yield that $||S_1||\leq O(n)$.

Regarding $S_3$, we similarly have from (\ref{S3}) that
\begin{align}
&||S_3||\notag\\
&\leq\sum_{s=1}^n\sum_{t=1}^n\sum_{i=1}^{(s\wedge t)-1}
\sum_{j=1}^{(s\wedge t)-1}
\left|\left|\left(\Phi_0^{t-1-j}\otimes\Phi_0^{s-1-i}\right){\bf M}^{\otimes 2}
\left\{(\Phi_0')^{t-1-i}\otimes(\Phi_0')^{s-1-j}\right\}\right|\right|\notag\\
&=||{\bf M}||^2\sum_{s=1}^n\sum_{t=1}^n\sum_{i=1}^{(s\wedge t)-1}
\sum_{j=1}^{(s\wedge t)-1}z^{2(s+t)-2(i+j+2)}\notag\\
&=||{\bf M}||^2z^{-4}\sum_{s=1}^n\sum_{t=1}^n
z^{2(s+t)}D_{st},\label{S3n}
\end{align}
where
\begin{align*}
D_{st}&=\sum_{i=1}^{(s\wedge t)-1}z^{-2i}
\sum_{j=1}^{(s\wedge t)-1}z^{-2j}
=\left(\frac{z^{-2(s\wedge t)}-z^{-2}}{z^{-2}-1}\right)^2,
\end{align*}
which inserted into (\ref{S3n}) gives
\begin{align*}
||S_3||&\leq||{\bf M}||^2\frac{1}{(1-z^2)^2}
\left(\sum_{s=1}^n\sum_{t=1}^n z^{2(s+t)-4(s\wedge t)}
-2z^{-2}\sum_{s=1}^n\sum_{t=1}^nz^{2(s+t)-2(s\wedge t)}
\right.\\
&\left.+z^{-4}\sum_{s=1}^n\sum_{t=1}^n
z^{2(s+t)}\right),
\end{align*}
which, as in (\ref{sumz1}) and (\ref{sumz2}), is $O(n)$.

Let ${\bf Y}=\sum_{t=1}^n\ym\ym'$ and ${\bf m}=({\bf I}_p-\bPhi)^{-1}\ve({\bf M})$. We have seen from (\ref{Evsumy2}) that
\begin{equation}
\ve\left\{E({\bf Y})\right\}=n\sigma^2{\bf m}+O(1).\label{EvY}
\end{equation}
Moreover, via (\ref{vsumyy2}) and (\ref{S2c}),
\begin{equation}
\ve\left\{E({\bf Y}{\bf Y}')\right\}=n^2\sigma^4\left\{\ve^{-1}({\bf m})\right\}^{\otimes 2}\ve({\bf I}_p)+O(n).\label{EvYY}
\end{equation}
With $V({\bf Y})$ as the covariance matrix of ${\bf Y}$, we have that
\begin{equation}
\ve\{V({\bf Y})\}=\ve\left\{E({\bf Y}{\bf Y}')\right\}-
\ve\left[\{E({\bf Y})\}\{E({\bf Y})\}'\right].\label{VY}
\end{equation}
We want to deduce that $V({\bf Y})=O(n)$. But this follows from (\ref{vecid}), which via (\ref{EvYY}) implies that
\begin{equation}
\ve\left\{E({\bf Y}{\bf Y}')\right\}=n^2\sigma^4
\ve\left[\ve^{-1}({\bf m})\{\ve^{-1}({\bf m})\}'\right]+O(n).\label{EvYY2}
\end{equation}
Moreover, via (\ref{EvY}),
\begin{equation}
\ve\left[\{E({\bf Y})\}\{E({\bf Y})\}'\right]
=n^2\sigma^4\ve\left[\ve^{-1}({\bf m})\{\ve^{-1}({\bf m})\}'\right]+O(n),
\label{vEY2}
\end{equation}
and combining (\ref{VY})-(\ref{vEY2}), we have $V({\bf Y})=O(n).$
Then, by the Chebychev inequality (cf  Gnedenko (1989), p. 248), as $n\to\infty$, $n^{-1}{\bf Y}$ converges in probability to its asymptotic expectation, which from (\ref{Evsumy2}) is given by
\begin{equation}
E\left\{n^{-1}\ve\left(\sum_{t=1}^n\ym\ym'\right)\right\}
=\sigma^2({\bf I}_{p^2}-\bPhi)^{-1}\ve({\bf M})+O(n^{-1}).\label{Evsumy3}
\end{equation}

Similarly, from (\ref{Evsumy2}),
\begin{equation}
n^{-1}\sum_{t=1}^n y_t^2=\sigma^2\ve({\bf M})'({\bf I}_{p^2}-\bPhi)^{-1}\ve({\bf M})+O(n^{-1}).\label{psumy2}
\end{equation}
Now, combining (\ref{Ab}), (\ref{phiobsphi2}), (\ref{Evsumy3}) and (\ref{psumy2}), we have
\begin{align}
n^{-1}A&=\sigma^2\left[\ve({\bf M})'({\bf I}_{p^2}-\bPhi)^{-1}\ve({\bf M})\right.\notag\\
&\left.-\bphi'\ve^{-1}\left\{({\bf I}_{p^2}-\bPhi)^{-1}\ve({\bf M})\right\}(2\bphi_0-\bphi)\right]+O_p(n^{-1/2}).\label{pA}
\end{align}
Then, via (\ref{B}), (\ref{phiobsphi}) and the fact that $\hat\Omega=\Omega_0+O_p(n^{-1/2})$, we get
\begin{align}
\frac{1}{n\sigma^2}B&=(\bphi_0-\bphi)'\Omega_0^{-1}(\bphi_0-\bphi)\notag\\
&-\ve({\bf M})'({\bf I}_{p^2}-\bPhi)^{-1}\ve({\bf M})\notag\\
&+\bphi'\ve^{-1}\left\{({\bf I}_{p^2}-\bPhi)^{-1}\ve({\bf M})\right\}(2\bphi_0-\bphi)+O_p(n^{-1/2}).\label{Bn}
\end{align}

Below, we will show that the main term of (\ref{Bn}) is no function of $\bphi$ or $\bphi_0$.
To this end, writing (let $\bphi^-_{p0}=(\phi_{20},...,\phi_{p0})$)
$$\Phi_0=\left(\begin{array}{cc}\phi_{10}&\bphi^-_{p0}\\
{\boldsymbol\delta}_q&{\bf J}_q\end{array}\right),
\quad{\boldsymbol\delta}_q=\left(\begin{array}{c}1\\{\bf 0}_{q-1}\end{array}
\right),\quad {\bf J}_q=\left(\begin{array}{cc}{\bf 0}_{q-1}'&0\\
{\bf I}_{q-1}&{\bf 0}_{q-1}\end{array}\right),$$
we have
$$\bPhi=\Phi_0^{\otimes 2}
=\left(\begin{array}{cc}
\phi_{10}\Phi_0&\bphi^-_{p0}\otimes\Phi_0\\
{\boldsymbol\delta}_q\otimes\Phi_0&{\bf J}_q\otimes\Phi_0\end{array}\right).$$
Hence, 
\begin{equation}
{\bf I}_{p^2}-\bPhi
=\left(\begin{array}{cc}
{\bf I}_p-\phi_{10}\Phi_0&-\bphi^-_{p0}\otimes\Phi_0\\
-{\boldsymbol\delta}_q\otimes\Phi_0&{\bf I}_{pq}-{\bf J}_q\otimes\Phi_0\end{array}\right)
=\left(\begin{array}{cc}{\bf A}_{11}&{\bf A}_{12}(q)\\
{\bf A}_{21}(q)&{\bf A}_{22}(q)\end{array}\right),\label{IPhi}
\end{equation}
where ${\bf A}_{11}={\bf I}_p-\phi_{10}\Phi_0$, etcetera. Correspondingly, writing
\begin{equation}
({\bf I}_{p^2}-\bPhi)^{-1}=\left(\begin{array}{cc}{\bf A}^{11}&{\bf A}^{12}(q)\\
{\bf A}^{21}(q)&{\bf A}^{22}(q)\end{array}\right),\label{IPhiinv}
\end{equation}
we have from standard formulae for the partitioned inverse (see e.g. Mardia et al 1979, p.459)
\begin{equation}
{\bf A}^{11}=\left\{{\bf A}_{11}-{\bf A}_{12}(q){\bf A}_{22}(q)^{-1}{\bf A}_{21}(q)\right\}^{-1}.\label{a11inv}
\end{equation}

How can we find ${\bf A}_{22}(q)^{-1}$? Note that
$${\bf J}_q\otimes\Phi_0=\left(\begin{array}{cc}{\bf 0}_{p\times p(q-1)}&
{\bf 0}_{p\times p}\\{\bf I}_{q-1}\otimes\Phi_0&{\bf 0}_{p(q-1)\times p}
\end{array}\right),$$
and writing 
$${\bf I}_{q-1}\otimes\Phi_0=\left(\begin{array}{cc}{\bf I}_{q-2}&{\bf 0}_{q-2}\\{\bf 0}_{q-2}'&1\end{array}\right)\otimes\Phi_0
=\left(\begin{array}{cc}
{\bf I}_{q-2}\otimes\Phi_0&{\bf 0}_{p(q-2)\times p}\\{\bf 0}_{p\times p(q-2)}&\Phi_0
\end{array}\right),$$
we get
$${\bf J}_q\otimes\Phi_0=\left(\begin{array}{ccc}{\bf 0}_{p\times p(q-2)}&
{\bf 0}_{p\times p}&{\bf 0}_{p\times p}\\
{\bf I}_{q-2}\otimes\Phi_0&{\bf 0}_{p(q-2)\times p}&{\bf 0}_{p(q-2)\times p}\\
{\bf 0}_{p\times p(q-2)}&\Phi_0&{\bf 0}_{p\times p}\end{array}\right)
=\left(\begin{array}{cc}{\bf J}_{q-1}&{\bf 0}_{q-1}\\
{\bf K}_{q-1}'&0\end{array}\right)\otimes\Phi_0,$$
where
$${\bf J}_{q-1}=\left(\begin{array}{cc}{\bf 0}_{q-2}'&0\\
{\bf I}_{q-2}&{\bf 0}_{q-2}\end{array}\right),\quad
{\bf K}_{q-1}'=\left(\begin{array}{cc}{\bf 0}_{q-2}'&1\end{array}\right).
$$
Then, in view of (\ref{IPhi}), we may write
$${\bf A}_{22}(q)={\bf I}_{pq}-{\bf J}_q\otimes\Phi_0
=\left(\begin{array}{cc}{\bf A}_{22}(q-1)&{\bf 0}_{p(q-1)\times p}\\
-{\bf K}_{q-1}'\otimes\Phi_0&{\bf I}_p\end{array}\right),$$
from which it follows from (\ref{a11inv}) that (or as is easily verified by multiplication) 
\begin{equation}
{\bf A}_{22}(q)^{-1}=\left(\begin{array}{cc}
{\bf A}_{22}(q-1)^{-1}&{\bf 0}_{p(q-1)\times p}\\
({\bf K}_{q-1}'\otimes\Phi_0){\bf A}_{22}(q-1)^{-1}&{\bf I}_p\end{array}\right).\label{a22inv}
\end{equation}
Now, by (\ref{IPhi}) and (\ref{a22inv}), we have
\begin{align*}
&{\bf A}_{12}(q){\bf A}_{22}(q)^{-1}{\bf A}_{21}(q)\\
&=\left(\begin{array}{cc}\bphi^-_{q0}\otimes\Phi_0&\phi_{p0}\Phi_0
\end{array}\right){\bf A}_{22}(q)^{-1}\left(\begin{array}{c}{\boldsymbol\delta}_{q-1}\otimes\Phi_0\\{\bf 0}_{p\times p}\end{array}
\right)\\
&=\left\{(\bphi^-_{q0}\otimes\Phi_0)+(\phi_{p0}{\bf K}_{q-1}'\otimes\Phi_0^2)\right\}{\bf A}_{22}(q-1)^{-1}({\boldsymbol\delta}_{q-1}\otimes\Phi_0)\\
&=\left(\begin{array}{cc}\bphi^-_{q-1,0}\otimes\Phi_0&\phi_{q0}\Phi_0+\phi_{p0}\Phi_0^2\end{array}\right){\bf A}_{22}(q-1)^{-1}
\left(\begin{array}{c}{\boldsymbol\delta}_{q-2}\otimes\Phi_0\\{\bf 0}_{p\times p}\end{array}
\right)\\
&=\left\{(\bphi^-_{q-1,0}\otimes\Phi_0)+(\phi_{q0}{\bf K}_{q-2}'\otimes\Phi_0^2)
+(\phi_{p0}{\bf K}_{q-2}'\otimes\Phi_0^3)\right\}{\bf A}_{22}(q-2)^{-1}
({\boldsymbol\delta}_{q-2}\otimes\Phi_0)\\
&=\left(\begin{array}{cc}\bphi^-_{q-2,0}\otimes\Phi_0&\phi_{q-1,0}\Phi_0+\phi_{q0}\Phi_0^2+\phi_{p0}\Phi_0^3\end{array}\right)
{\bf A}_{22}(q-2)^{-1}
\left(\begin{array}{c}{\boldsymbol\delta}_{q-3}\otimes\Phi_0\\{\bf 0}_{p\times p}\end{array}\right)\\
&=...\\
&=\left(\begin{array}{cc}\phi_{20}\Phi_0&\sum_{k=3}^p\phi_{k0}\Phi_0^{k-2}\end{array}\right){\bf A}_{22}(2)^{-1}
\left(\begin{array}{c}\Phi_0\\{\bf 0}_{p\times p}\end{array}\right).
\end{align*}
Thus, observing that
\begin{equation}
{\bf A}_{22}(2)={\bf I}_{2p}-{\bf J}_2\otimes\Phi_0=
\left(\begin{array}{cc}{\bf I}_p&{\bf 0}_{p\times p}\\
-\Phi_0&{\bf I}_p\end{array}\right),\label{a222inv}
\end{equation}
which gives
$${\bf A}_{22}(2)^{-1}=\left(\begin{array}{cc}{\bf I}_p&{\bf 0}_{p\times p}\\\Phi_0&{\bf I}_p\end{array}\right),$$
we get
\begin{align*}
{\bf A}_{12}(q){\bf A}_{22}(q)^{-1}{\bf A}_{21}(q)
&=\left(\begin{array}{cc}\phi_{20}\Phi_0&\sum_{k=3}^p\phi_{k0}\Phi_0^{k-2}\end{array}\right)\left(\begin{array}{cc}{\bf I}_p&{\bf 0}_{p\times p}\\\Phi_0&{\bf I}_p\end{array}\right)
\left(\begin{array}{c}\Phi_0\\{\bf 0}_{p\times p}\end{array}\right)\\
&=\sum_{k=2}^p\phi_{k0}\Phi_0^k,
\end{align*}
and (\ref{IPhi})-(\ref{a11inv}) imply
\begin{equation}
{\bf A}^{11}=\left({\bf I}_p-\sum_{k=1}^p\phi_{k0}\Phi_0^k\right)^{-1}.\label{a11inv2}\end{equation}

Next, again from Mardia et al (1979), p.459, we have
\begin{equation}
{\bf A}^{21}(q)=-{\bf A}_{22}(q)^{-1}{\bf A}_{21}(q){\bf A}^{11},
\label{a21}
\end{equation}
and defining
$${\bf L}_{q-1}=\left(\begin{array}{c}
{\bf I}_{p(q-1)}\\{\bf K}_{q-1}'\otimes\Phi_0\end{array}\right),$$
we find from (\ref{IPhi}) and (\ref{a22inv}) that
\begin{align}
&-{\bf A}_{22}(q)^{-1}{\bf A}_{21}(q)\notag\\
&=\left(\begin{array}{cc}
{\bf A}_{22}(q-1)^{-1}&{\bf 0}_{p(q-1)\times p}\\
({\bf K}_{q-1}'\otimes\Phi_0){\bf A}_{22}(q-1)^{-1}&{\bf I}_p\end{array}\right)\left(\begin{array}{c}{\boldsymbol\delta}_{q-1}\otimes\Phi_0\\
{\bf 0}_{p\times p}\end{array}\right)\notag\\
&={\bf L}_{q-1}{\bf A}_{22}(q-1)^{-1}({\boldsymbol\delta}_{q-1}\otimes\Phi_0)\notag\\
&={\bf L}_{q-1}\left(\begin{array}{cc}
{\bf A}_{22}(q-2)^{-1}&{\bf 0}_{p(q-2)\times p}\\
({\bf K}_{q-2}'\otimes\Phi_0){\bf A}_{22}(q-2)^{-1}&{\bf I}_p\end{array}\right)\left(\begin{array}{c}{\boldsymbol\delta}_{q-2}\otimes\Phi_0\\
{\bf 0}_{p\times p}\end{array}\right)\notag\\
&={\bf L}_{q-1}{\bf L}_{q-2}{\bf A}_{22}(q-2)^{-1}({\boldsymbol\delta}_{q-2}\otimes\Phi_0)=...\notag\\
&={\bf L}_{q-1}{\bf L}_{q-2}\cdots{\bf L}_2{\bf A}_{22}(2)^{-1}({\boldsymbol\delta}_{2}\otimes\Phi_0).\label{a22a21}
\end{align}
Here, observe that
\begin{align*}
&{\bf L}_{q-1}{\bf L}_{q-2}\\
&=\left(\begin{array}{c}
{\bf I}_{p(q-1)}\\{\bf K}_{q-1}'\otimes\Phi_0\end{array}\right)
\left(\begin{array}{c}
{\bf I}_{p(q-2)}\\{\bf K}_{q-2}'\otimes\Phi_0\end{array}\right)\\
&=\left(\begin{array}{cc}
{\bf I}_{p(q-2)}&{\bf 0}_{p(q-2)\times p}\\
{\bf 0}_{p\times p(q-2)}&{\bf I}_p\\
{\bf 0}_{p\times p(q-2)}&\Phi_0\end{array}\right)
\left(\begin{array}{c}
{\bf I}_{p(q-2)}\\{\bf K}_{q-2}'\otimes\Phi_0\end{array}\right)
=\left(\begin{array}{c}
{\bf I}_{p(q-2)}\\{\bf K}_{q-2}'\otimes\Phi_0\\
{\bf K}_{q-2}'\otimes\Phi_0^2\end{array}\right),
\end{align*}
and similarly,
\begin{align*}
&{\bf L}_{q-1}{\bf L}_{q-2}{\bf L}_{q-3}\\
&=\left(\begin{array}{c}
{\bf I}_{p(q-2)}\\{\bf K}_{q-2}'\otimes\Phi_0\\
{\bf K}_{q-2}'\otimes\Phi_0^2\end{array}\right)
\left(\begin{array}{c}
{\bf I}_{p(q-3)}\\{\bf K}_{q-3}'\otimes\Phi_0\end{array}\right)\\
&=\left(\begin{array}{cc}
{\bf I}_{p(q-3)}&{\bf 0}_{p(q-3)\times p}\\
{\bf 0}_{p\times p(q-3)}&{\bf I}_p\\
{\bf 0}_{p\times p(q-3)}&\Phi_0\\
{\bf 0}_{p\times p(q-3)}&\Phi_0^2\end{array}\right)
\left(\begin{array}{c}
{\bf I}_{p(q-3)}\\{\bf K}_{q-3}'\otimes\Phi_0\end{array}\right)
=\left(\begin{array}{c}
{\bf I}_{p(q-3)}\\{\bf K}_{q-3}'\otimes\Phi_0\\
{\bf K}_{q-3}'\otimes\Phi_0^2\\{\bf K}_{q-3}'\otimes\Phi_0^3
\end{array}\right),
\end{align*}
so that by recursion,
$${\bf L}_{q-1}{\bf L}_{q-2}\cdots{\bf L}_2
=\left(\begin{array}{c}
{\bf I}_{2p}\\{\bf K}_2'\otimes\Phi_0\\
{\bf K}_2'\otimes\Phi_0^2\\
\vdots\\{\bf K}_2'\otimes\Phi_0^{q-2}
\end{array}\right).$$
Hence, because from (\ref{a222inv}),
$${\bf A}_{22}(2)^{-1}({\boldsymbol\delta}\otimes\Phi_0)
=\left(\begin{array}{cc}{\bf I}_p&{\bf 0}_{p\times p}\\
\Phi_0&{\bf I}_p\end{array}\right)
\left(\begin{array}{c}\Phi_0\\{\bf 0}_{p\times p}\end{array}\right)
=\left(\begin{array}{c}\Phi_0\\\Phi_0^2\end{array}\right),$$
we get via (\ref{a22a21}) that
$$-{\bf A}_{22}(q)^{-1}{\bf A}_{21}(q)=
\left(\begin{array}{c}
{\bf I}_{2p}\\{\bf K}_2'\otimes\Phi_0\\{\bf K}_2'\otimes\Phi_0^2\\
\vdots\\{\bf K}_2'\otimes\Phi_0^{q-2}\end{array}\right)
\left(\begin{array}{c}\Phi_0\\\Phi_0^2\end{array}\right)
=\left(\begin{array}{c}
\Phi_0\\\Phi_0^2\\\vdots\\\Phi_0^q\end{array}\right),$$
i.e. by (\ref{a11inv2}) and (\ref{a21}),
\begin{equation}
\left(\begin{array}{c}{\bf A}^{11}\\{\bf A}^{21}(q)\end{array}\right)
=\left(\begin{array}{c}
{\bf I}_p\\\Phi_0\\\vdots\\\Phi_0^q\end{array}\right){\bf A}^{11},\quad
{\bf A}^{11}=\left({\bf I}_p-\sum_{k=1}^p\phi_{k0}\Phi_0^k\right)^{-1}.\label{a11a21}
\end{equation}
Moreover, since ${\bf M}={\boldsymbol\delta}_p{\boldsymbol\delta}_p'$, (\ref{vecid}) implies
\begin{equation}
\ve({\bf M})={\boldsymbol\delta}_p\otimes{\boldsymbol\delta}_p
=\left(\begin{array}{c}{\boldsymbol\delta}_p\\{\bf 0}_{p(p-1)}\end{array}\right),\label{vecM}
\end{equation}
which together with (\ref{IPhiinv}) yields
\begin{align}
&\ve({\bf M})'({\bf I}_{p^2}-\bPhi)^{-1}\ve({\bf M})\notag\\
&=\left(\begin{array}{cc}{\boldsymbol\delta}_p'&{\bf 0}_{p(p-1)}'
\end{array}\right)
\left(\begin{array}{cc}{\bf A}^{11}&{\bf A}^{12}(q)\\{\bf A}^{21}(q)&{\bf A}^{22}(q)\end{array}\right)\left(\begin{array}{c}{\boldsymbol\delta}_p\\{\bf 0}_{p(p-1)}\end{array}\right)
={\boldsymbol\delta}_p'{\bf A}^{11}{\boldsymbol\delta}_p.
\label{MIPhiM}
\end{align}
Similarly, (\ref{IPhiinv}), (\ref{a11a21}) and (\ref{vecM}) give
\begin{align*}
&({\bf I}_{p^2}-\bPhi)^{-1}\ve({\bf M})\\
&=\left(\begin{array}{cc}{\bf A}^{11}&{\bf A}^{12}(q)\\{\bf A}^{21}(q)&{\bf A}^{22}(q)\end{array}\right)\left(\begin{array}{c}{\boldsymbol\delta}_p\\{\bf 0}_{p(p-1)}\end{array}\right)
=\left(\begin{array}{c}
{\bf I}_p\\\Phi_0\\\vdots\\\Phi_0^q\end{array}\right){\bf A}^{11}
{\boldsymbol\delta}_p,
\end{align*}
so that
\begin{align}
&\ve^{-1}\left\{({\bf I}_{p^2}-\bPhi)^{-1}\ve({\bf M})\right\}\notag\\
&=\left(\begin{array}{cccc}{\bf A}^{11}{\boldsymbol\delta}_p&
\Phi_0{\bf A}^{11}{\boldsymbol\delta}_p&\ldots&\Phi_0^q{\bf A}^{11}{\boldsymbol\delta}_p\end{array}\right).\label{veinv}
\end{align}

Next, we need to get a suitable expression for $\Omega_0^{-1}$, which is the inverse of the asymptotic covariance matrix of $\hat\bphi-\bphi$. At first, note that we can write
\begin{align}
\hat\bphi-\bphi_0&=\left(\sum_{t=1}^n\ym\ym'\right)^{-1}
\sum_{t=1}^n\ym(y_t-\ym'\bphi_0)\notag\\
&=\left(\sum_{t=1}^n\ym\ym'\right)^{-1}
\sum_{t=1}^n\ym\varepsilon_t.\label{phihatphi}
\end{align}
Now, from (\ref{Evsumy2}) and the law of large numbers, we find
\begin{equation}
n^{-1}\sum_{t=1}^n\ym\ym'=E\left(\sum_{t=1}^n\ym\ym'\right)+O_p(n^{-1})=\sigma^2{\bf R}+O_p(n^{-1}),\label{sumylim}
\end{equation}
where
$${\bf R}=\ve^{-1}\left\{({\bf I}_{p^2}-\bPhi)^{-1}\ve({\bf M})\right\}.$$
Hence, by the Slutsky theorem, to find the limiting covariance matrix of (\ref{phihatphi}), it is enough to find the limiting covariance matrix of 
$\sum_{t=1}^n\ym\varepsilon_t$. Because its expectation is zero, this matrix is given from the expectation of (using (\ref{ym}))
\begin{align*}
&\left(\sum_{s=1}^n{\bf y}_{s-}\varepsilon_s\right)\left(\sum_{t=1}^n\ym\varepsilon_t\right)'\\
&=\sum_{s=1}^n\sum_{t=1}^n{\bf y}_{s-}\varepsilon_s\varepsilon_t\ym'
=\sum_{s=1}^n\sum_{t=1}^n
\sum_{i=1}^{s-1}\varepsilon_s\varepsilon_t\Phi_0^{s-1-i}\te_i
\left(\sum_{j=1}^{t-1}\Phi_0^{t-1-j}\te_j\right)'\\
&=\sum_{s=1}^n\sum_{t=1}^n
\sum_{i=1}^{s-1}\sum_{j=1}^{t-1}\varepsilon_s\varepsilon_t\Phi_0^{s-1-i}\te_i\te_j'(\Phi_0')^{t-1-j},
\end{align*}
which is, using $E(\varepsilon_s\varepsilon_t)=\sigma^2$ if $s=t$ and 0 otherwise and $E(\te_i\te_j')=\sigma^2{\bf M}$ if $i=j$ and 0 otherwise and proceeding as in (\ref{sumPhiMPhi}),
\begin{align*}
&E\left\{\left(\sum_{s=1}^n{\bf y}_{s-}\varepsilon_s\right)\left(\sum_{t=1}^n\ym\varepsilon_t\right)'\right\}\\
&=\sigma^4\sum_{s=1}^n
\sum_{i=1}^{s-1}\Phi_0^{s-1-i}{\bf M}(\Phi_0')^{t-1-j}
=n\sigma^4{\bf R}+O(1).
\end{align*} 
Thus, (\ref{phihatphi}), (\ref{sumylim}) and the Slutsky theorem imply that $n$ times the covariance matrix of $\hat\bphi-\bphi_0$ is
\begin{align*}
\Omega_0&=E\left[\left(n^{-1}\sum_{t=1}^n\ym\ym'\right)^{-1}
n^{-1}\left\{\left(\sum_{s=1}^n{\bf y}_{s-}\varepsilon_s\right)\left(\sum_{t=1}^n\ym\varepsilon_t\right)'\right\}\right.\\
&\cdot\left.\left(n^{-1}\sum_{t=1}^n\ym\ym'\right)^{-1}\right]\\
&={\bf R}^{-1}{\bf R}{\bf R}^{-1}+O(n^{-1})
={\bf R}^{-1}+O(n^{-1}),
\end{align*}
and (\ref{Bn}) and (\ref{MIPhiM}) yield
\begin{align}
\frac{1}{n\sigma^2}B&=(\bphi_0-\bphi)'{\bf R}(\bphi_0-\bphi)-\ve({\bf M})'({\bf I}_{p^2}-\bPhi)^{-1}\ve({\bf M})\notag\\
&+\bphi'{\bf R}(2\bphi_0-\bphi)+O_p(n^{-1/2})\notag\\
&=\bphi_0'{\bf R}\bphi_0-{\boldsymbol\delta}_p'{\bf A}^{11}{\boldsymbol\delta}_p+O_p(n^{-1/2}).\label{Bn2}
\end{align}
But now, by (\ref{veinv}) and since $\bphi_0'={\boldsymbol\delta}_p'\Phi_0$,
\begin{align*}
\bphi_0'{\bf R}\bphi_0&=\bphi_0'\left(\begin{array}{cccc}{\bf A}^{11}{\boldsymbol\delta}_p&
\Phi_0{\bf A}^{11}{\boldsymbol\delta}_p&\ldots&\Phi_0^q{\bf A}^{11}{\boldsymbol\delta}_p\end{array}\right)\bphi_0\\
&={\boldsymbol\delta}_p'\left(\begin{array}{cccc}\Phi_0{\bf A}^{11}{\boldsymbol\delta}_p&
\Phi_0^2{\bf A}^{11}{\boldsymbol\delta}_p&\ldots&\Phi_0^p{\bf A}^{11}{\boldsymbol\delta}_p\end{array}\right)\left(\begin{array}{c}
\phi_{10}\\\phi_{20}\\\vdots\\\phi_{p0}\end{array}\right)\\
&={\boldsymbol\delta}_p'\left(\sum_{k=1}^p\phi_{k0}\Phi_0^k\right)
{\bf A}^{11}{\boldsymbol\delta}_p,
\end{align*}
and using (\ref{a11inv2}), the main term of (\ref{Bn2}) becomes
$$\bphi_0'{\bf R}\bphi_0-{\boldsymbol\delta}_p'{\bf A}^{11}{\boldsymbol\delta}_p={\boldsymbol\delta}_p'\left(\sum_{k=1}^p\phi_{k0}\Phi_0^k-{\bf I}_p\right)
{\bf A}^{11}{\boldsymbol\delta}_p
=-{\boldsymbol\delta}_p'{\boldsymbol\delta}_p=-1.$$
Conclusively, (\ref{Bn2}) gives
$$\frac{1}{n\sigma^2}B
=-1+O_p(n^{-1/2}).$$
Inserting into (\ref{frac}) gives the result.

\section*{References}

Brockwell, P.J., and Davies R.A. (2002) \emph{Introduction to Time Series and Forecasting, 2nd ed.}, New York: Springer.
\\[1.4mm]
Cox, D. R. (1958) "Some problems connected with statistical inference," \emph{Annals of Mathematical Statistics}, 29, 357-372.
\\[1.4mm]
Fisher, R. A. (1930) "Inverse probability," \emph{Proceedings of the Cambridge Philosophical Society}, 26, 528-535.
\\[1.4mm]
Hald, A. (2007) \emph{A History of Parametric Statistical Inference from Bernoulli to Fisher, 1713-1935}, New York: Springer.
\\[1.4mm]
Gnedenko, B. V. (1989) \emph{The Theory of Probability, 4th ed.}, New York: Chelsea Publishing Company.
\\[1.4mm]
Jeffreys, H. (1931) \emph{Theory of Probability}, Cambridge: Cambridge University Press.
\\[1.4mm]
Knowles, I., and Renka, R. J.  (2014) "Methods for numerical differentiation of noisy data," \emph{Electronic Journal of Differential Equations, Conference 21}, 235-246.
\\[1.4mm]
Lancaster, P., and Farahat, H. K. (1972) "Norms on direct sums and tensor products," \emph{Mathematics of Computation}, 118, 401-414.
\\[1.4mm]
Larsson, R (2024) "Confidence distributions for the autoregressive parameter," \emph{The American Statistician},
78, 58-65.
\\[1.4mm]
Mann, H. B., and Wald, A.  (1943) "On stochastic limit and order relationships," \emph{Annals of Mathematical Statistics}, 14, 217-226.
\\[1.4mm]
Mardia, K.V., Kent, J.T. and Bibby, J.M. (1979) \emph{Multivariate Analysis}, London: Academic Press.
\\[1.4mm]
Neyman, J. (1941) "Fiducial argument and the theory of confidence intervals," \emph{Biometrika}, 32, 128-150.
\\[1.4mm]
Pawitan, Y. and Lee, Y (2021) "Confidence as likelihood." \emph{Statistical Science} 36, 509-517.
\\[1.4mm]
Schweder, T., and Hjort, N. L. (2016) \emph{Confidence, Likelihood, Probability - Statistical Inference with Confidence Distributions}, Cambridge University Press: New York.
\\[1.4mm]
Shumway, R. H., Stoffer, D. S. (2017) \emph{Time Series Analysis and its Applications -- With R Examples. 4th edition},
New York: Springer.
\\[1.4mm]
Wei, W.W.S. (2006) \emph{Time Series Analysis -- Univariate and Multivariate Methods. 2nd edition}, Pearson Education.
\\[1.4mm]
Xie, M., and Singh, K. (2013) "Confidence distribution, the frequentist distribution estimator of a parameter: a review," \emph{International Statistical review}, 81, 3-39.
\\[1.4mm]

\end{document}